\let\NAT@parse\undefined
\newtheorem{thm}{Theorem}
\newtheorem{lem}{Lemma}
\newtheorem{rem}{Remark}
\newtheorem{defn}{Definition}
\newtheorem{cor}{Corollary}\crefname{cor}{Corollary}{Corollaries}
\newtheorem{prb}{Problem}\crefname{prb}{Problem}{Problems}
\algrenewcommand\ALG@beginalgorithmic{\footnotesize}
\title{\LARGE \bf
Shared Linear Quadratic Regulation Control:\\A Reinforcement Learning Approach
}
\author{Murad~Abu-Khalaf, Sertac~Karaman, and~Daniela~Rus
	\thanks{Murad~Abu-Khalaf and Daniela~Rus are with the MIT Computer Science and Artificial Intelligence Laboratory, Massachusetts Institute of Technology, Cambridge, MA~02139 USA. {\tt\small \{murad,rus\}@csail.mit.edu}}%
	\thanks{Sertac~Karaman is with the Laboratory of Information and Decision Systems, Massachusetts Institute of Technology, Cambridge, MA~02139 USA. {\tt\small sertac@mit.edu}}%
	\thanks{We gratefully acknowledge the support provided by the Toyota Research Institute.}%
}
\begin{document}
\allowdisplaybreaks

\maketitle
\thispagestyle{empty}
\pagestyle{empty}

\begin{abstract}

We propose controller synthesis for state regulation problems in which a human operator shares control with an autonomy system, running in parallel. The autonomy system continuously improves over human action, with minimal intervention, and can take over full-control if necessary. It additively combines user input with an adaptive optimal corrective signal to drive the plant. It is adaptive in the sense that it neither estimates nor requires a model of the human's action policy, or the internal dynamics of the plant, and can adjust to changes in both. Our contribution is twofold; first, a new controller synthesis for shared control which we formulate as an adaptive optimal control problem for continuous-time linear systems and solve it online as a human-in-the-loop reinforcement learning. The result is an architecture that we call \emph{shared} linear quadratic regulator (sLQR). Second, we provide new analysis of reinforcement learning for continuous-time linear systems in two parts. In the first analysis part, we avoid learning along a single state-space trajectory which we show leads to data collinearity under certain conditions. In doing so, we make a clear separation between exploitation of learned policies and exploration of the state-space, and propose an exploration scheme that requires switching to new state-space trajectories rather than injecting noise continuously while learning. This avoidance of continuous noise injection minimizes interference with human action, and avoids bias in the convergence to the stabilizing solution of the underlying algebraic Riccati equation. We show that exploring a minimum number of pairwise distinct state-space trajectories is necessary to avoid collinearity in the learning data. In the second analysis part, we show conditions under which existence and uniqueness of solutions can be established for off-policy reinforcement learning in continuous-time linear systems; namely, prior knowledge of the input matrix.

\end{abstract}

\section{Introduction} \label{Section:Intro}

We address technical challenges associated with adaptation and optimization that emerge when a human operator shares low-level regulation control tasks with an automatic control system. The rapid acceleration in machine learning and artificial intelligence at large creates opportunities to operate in a data-driven response-based manner, rather than an offline model-based design, and without a priori knowledge of the human-in-the-loop action policy. We wish to investigate these notions for a fundamental building block of control system design, \textit{e.g.} the linear quadratic regulator (LQR). The theory is inspired by applications in which a human operator, assisted by an autonomy system, is regulating a steering angle, speed, or spacing between vehicles, among others, or balancing objects including the human body itself.

Different levels of shared autonomy are reported in the literature. In robotic teleoperation, shared autonomy augments a user's ability to control a robot via an interface that generally has less degrees of freedom than the robot itself. Many such paradigms of shared autonomy require the autonomy system to predict user intentions and augment user input via a policy arbitration scheme. The authors in \cite{Dragan_PolicyBlending2013} proposed policy-blending as a ``common lens'' to understand policy arbitration and control effort division across time or tasks while \cite{SharedAutonomyHindsightOptimization2015} proposes to handle a distribution of goals at once by formulating a Partially Observable Markov Decision Problem (POMDP). Modeling assumptions for the underlying POMDP and the user policy have been relaxed in \cite{Dragan_SharedAutonomyDeepReinforcementLearning2018} by leveraging a human-in-the-loop deep Q-learning.

In \cite{Anderson_OptimalControlSemi-autonomous2010, Anderson_ConstraintSemi-autonomous2012}, shared control for semi-autonomous driving is formulated as a Model Predictive Control (MPC) problem. Its cost function does not account for human input and uses instead a form of policy arbitration. In \cite{Tilbury_SharedControl2017}, an MPC problem to deal with obstacle avoidance is formulated that handles policy arbitration implicitly, and uses quadratic costs that consider human input. In \cite{Schwarting_MPC2017}, parallel autonomy for safe driving is formulated as an MPC problem that handles policy arbitration implicitly as well. It estimates the user input and holds it constant while evolving the dynamics over the prediction horizon. It accounts for the user input in the cost function and uses a forgetting factor to emphasize short-term impact of user input. All these MPC approaches are model-based, and human input when considered is provided as an estimate that is held constant over the prediction horizon.

In \cite{Annaswamy_SharedAdaptiveAutoPilot2018}, aircraft control tasks are shared between a human pilot and an adaptive autopilot. The pilot assumes high-level tasks, detection of anomalies and switching of controller structure, and relegates low-level regulation to the autopilot.

In this paper, a human operator shares low-level regulation control task with an adaptive parallel autonomy system via a human-computer interface (HCI), and without requiring a model of the human policy or the plant's internal dynamics. We formulate the problem in an optimal control theoretic sense, and solve the underlying dynamic programming problem online via reinforcement learning. The objective of the autonomy system is to assist the human operator by improving closed-loop performance without significantly deviating from user input, and to take over from the user when necessary. It serves as a secondary controller or co-pilot, that optimally modulates with an additive corrective signal the output of a primary controller enacted by a human operator; therefore, policy arbitration here is implicit. We refer to the resulting architecture as a shared linear quadratic regulator (sLQR). It solves the underlying algebraic Riccati equation (ARE) for the human-in-the-loop closed-loop dynamics.

Our learning approach is driven by the system response and requires partial model information. We dissect the role of exploration from exploitation to eliminate sources of collinearity in the learning data, and thus the need to continuously inject perturbation noise as common in the literature. Our focus is on continuous-time linear systems. A quadratic form is used to capture the Lyapunov function or cost-to-go of the optimization problem resulting in essentially learning algorithms that involve polynomial regression. For more general nonlinear dynamics with state and input constraints, extensions are possible via the use of neural networks and approximate dynamic programming tools from \cite{AbukhalafLewis_NNHJB2005}.

In \Cref{Section:Notation}, we introduce the notation used throughout the paper. In \Cref{Section:ProblemFormulation}, we formulate shared control as an adaptive optimal control problem for state regulation. In \Cref{Section:PreliminariesPolicyIterations} we review relevant background on reinforcement learning useful to our shared control methodology and point out existing gaps in the literature. In \Cref{Section:NewSolvabilityAnalysis} we address some existing gaps by providing new solvability analysis for policy iterations. In \Cref{Section:sLQR}, we solve the formulated shared control problem and present the sLQR. In \Cref{Section:Example} we apply sLQR to a car-following application. In Section \ref{Section:Conclusion}, we provide some conclusions and future directions.

\subsection{Notation} \label{Section:Notation}
$\mathbb{R}$ denotes the real line and $\mathbb{C}$ the set of complex numbers. Given matrix $Y\in \mathbb{R}^{p\times q} $, $Y_{([1:p],j)} $ denotes its \textit{j\textsuperscript{th}} column and $Y_{([i_1:i_2],j)} $ denotes a column vector formed from the \textit{j\textsuperscript{th}} column of $Y$ starting at row $i_1$ and ending at row $i_2$. The element at the \textit{i\textsuperscript{th}} row and \textit{j\textsuperscript{th}} column is denoted by $Y_{(i,j)} $. From \cite{Brewer1978}, $vec(\cdot )$ stacks the columns of $Y$ from first to last into a single column of size $pq\times 1$ as shown, and the operator $\otimes $ is the Kronecker product where given matrix $Z$ of an arbitrary size
\[vec(Y)\buildrel\Delta\over= \left[\begin{array}{c} {Y_{([1:p],1)} } \\ {Y_{([1:p],2)} } \\ {\vdots } \\ {Y_{([1:p],q)} } \end{array}\right], 
Y\otimes Z\buildrel\Delta\over= \left[\begin{array}{ccc} {Y_{(1,1)} Z} & {\ldots } & {Y_{(1,q)} Z}  \\ {\vdots }  & {\ddots } & {\vdots } \\ {Y_{(p,1)} Z}  & {\ldots } & {Y_{(p,q)} Z} \end{array}\right].\]

We define $vec^{L} (\cdot )$ which operates on a square matrix of size $n\times n$ by stacking from first to last the columns of its lower triangular part into a single column of size $\frac{n(n+1)}{2}\times 1$ as shown for $Z\in \mathbb{R}^{n\times n}$
\[vec^{L}(Z)\buildrel\Delta\over= \left[\begin{array}{c} {Z_{([1:n],1)} } \\ {Z_{([2:n],2)} } \\ {\vdots } \\ {Z_{([n:n],n)} } \end{array}\right].\]

\noindent $\kappa(i,j,n)$, where $i\geq j$, denotes the index of an element at the \textit{i\textsuperscript{th}} row and \textit{j\textsuperscript{th}} column of $Z\in \mathbb{R}^{n\times n}$ as it appears in $vec^{L} (Z )$. For example, $\kappa(1,1,n)=1$ and $\kappa(n,n,n)=\frac{n(n+1)}{2}$.

If $x \in \mathbb{R}^{n\times 1}$, then $x\otimes x = vec(x{x}^{\intercal})$ where ${x}^{\intercal}$ is the transpose of $x$. If additionally $P\in \mathbb{R}^{n\times n}$, then $x^{\intercal}Px =(x\otimes x)^{\intercal} vec(P)=vec(xx^{\intercal} )^{\intercal} vec(P)=vec^{L} (xx^{\intercal} )^{\intercal} W$, where $W\in \mathbb{R}^{\frac{n(n+1)}{2}\times 1}$ and $W_{(\kappa(i,j,n),1)} =  P_{(i,j)}+P_{(j,i)}-\delta_{ij}P_{(i,j)}$ where $\delta_{ij}$ is the Kronecker delta function. This follows from matching the terms of
\begin{equation*}
\begin{aligned}
vec^{L} (xx^{\intercal} )^{\intercal} W&=\sum _{j=1}^{n}\sum _{i=j}^{n}w_{\kappa(i,j,n)} x_{i} x_{j},\\
vec(xx^{\intercal} )^{\intercal} vec(P)&=\sum _{j=1}^{n}\sum _{i=j}^{n}\left( P_{(i,j)}+P_{(j,i)}-\delta_{ij}P_{(i,j)}\right)  x_{i} x_{j}.
\end{aligned}	
\end{equation*}

An all zero-entries $n\times m$ matrix is denoted by $\mathbf{0}_{n\times m} $, and by $\mathbf{0}$ when the size is context-dependent. Similarly, $\mathbf{I}_{n} $ denotes a size \textit{n} identity matrix, and $\mathbf{I}$ is used when the size is context-dependent. Moreover, we denote by $\mathbf{e}_{ij} \in \mathbb{R}^{m \times n}$ a matrix with $1$ at the $i^{th}$ row and $j^{th}$ column and $0$ otherwise.

\section{Problem Formulation} \label{Section:ProblemFormulation}

Consider a continuous-time linear system given by
\begin{subequations}  \label{Equation:HCI}
	\begin{align}\label{Equation:LinearSystem}
	\dot{x}(t)=&Ax(t)+Bu(t),\\
	y_a =& x,\\
	y_h =& C_h x,\\
	u=&u_{h} + u_{a}, \label{Equation:DecomposedInput}
	\end{align}
\end{subequations} 
with $x\in {\mathbb{R}}^{n\times 1} $ and $u\in {\mathbb{R}}^{m\times 1} $; $A\in {\mathbb{R}}^{n\times n} $, $B\in {\mathbb{R}}^{n\times m}$ and $C_h \in {\mathbb{R}}^{p\times n}$ are constant matrices with $(A,B)$ stabilizable, and $(A,B,C_h)$ static output feedback stabilizable.  $y_a \in {\mathbb{R}}^{n\times 1}$ is the plant output available to the autonomy system, while $y_h \in {\mathbb{R}}^{p\times 1} $ is the plant output available to the human operator. The input to the plant $u$ is provided by a human-computer interface and is decomposed into a human-generated $u_{h}$ and an autonomy-computed $u_{a}$ per \eqref{Equation:DecomposedInput}.

\begin{figure}[!htb]
	\center{\includegraphics[width=\columnwidth]
		{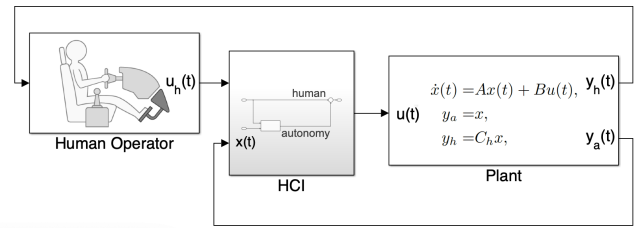}}
	\caption{\label{Figure:HCI} Parallel Autonomy.}
\end{figure}
The human's action policy is unknown to the autonomy system which can observe and measure $u_{h}$. It is assumed that the human operator observes the plant via $y_h = C_h x$ and that the human implements a linear static output feedback control to stabilize the plant as follows $u_{h} (x)=K_{h}y_h(x) =  K_{h} C_hx$. Thus, the human policy is linear in the state for an observation matrix $C_h$ and gain $K_{h}$ that are unknown to the autonomy system. Unlike the human operator, the autonomy system has full access to the state of the plant, \textit{i.e.} $y_a = x$.

The aim of the autonomy system is to compute and apply $u_{a} $ to meet two objectives: First, to improve the human-in-the-loop closed-loop performance with minimum intervention as experienced by the user. Second, to entirely take over control from the human operator when necessary. These two objectives are stated concisely as Problem 1 and Problem 2.

\begin{prb} \label{Problem:MainProblem}
	\emph{Minimum Intervention}: Consider the parallel autonomy system \eqref{Equation:HCI}, and assume no prior knowledge of the plant's internal dynamics matrix $A$, the human's observation matrix $C_h$ and the human output feedback gain $K_{h}$. Solve the infinite-horizon optimal control problem
	
	\begin{equation} \label{Equation:MainProblem} 
	J(x_{0}, t_0, u_{h})={\inf\limits_{u_a}} \smallint\limits_{t_{0}}^{\infty}\left(x^{\intercal} Qx+u_{h}^{\intercal} (x)Mu_{h} (x)+u_{a}^{\intercal} Ru_{a} \right)dt,
	\end{equation} 
	where $x_0=x(0)$, adapting to subsequent changes in $A$,  $K_{h}$, and $C_h$. It is assumed that $u_h$ is stabilizing.
\end{prb}

\begin{prb} \label{Problem:TakeOver}
	\emph{Takeover}: Consider the parallel autonomy system \eqref{Equation:HCI}, and assume no prior knowledge of $A$, $C_h$ and $K_{h}$. With a human-in-the-loop, $K_{h}\neq0$ and stabilizing, learn an optimal control policy for the infinite-horizon problem \eqref{Equation:MainProblem} such that the learned policy is optimal for $u_{h}=0$, \emph{i.e.} is able to stabilize the plant and take over control entirely after the human exits the loop.
\end{prb}

\section{ Policy Iterations:  Gaps in Previous Results} \label{Section:PreliminariesPolicyIterations}

In Section \ref{Section:ResponseBased}, \Cref{Remark:Collinearity} and \Cref{Remark:Nonunique} highlight gaps in response-based policy iteration schemes useful to our proposed shared control approach. We address these gaps in Section \ref{Section:NewSolvabilityAnalysis}. We first briefly revisit offline policy iterations.

\subsection{Offline Model-Based Policy Iterations}  \label{Section:Offline}

This type of policy iterations is executed offline without generating a system response. It requires full-knowledge of system matrices $A$ and $B$. In this case, given $u_{i}(t)=K_{i} x(t)$ with stabilizing $K_{i} \in {\mathbb{R}}^{m \times n} $, one has state-space trajectories $\varphi(t,x_0,u_{i}(t))$ where $x_0 = x(0)$ and such that
\begin{equation} \label{Equation:StateTrajectories} 
x(t)=e^{(A+BK_{i} )(t-t_{0} )} x(t_{0}).
\end{equation} 
A quadratic cost-to-go for policy $u_{i}$ is obtained by integrating over $\varphi(t,x_0,u_{i}(t))$ as follows
\begin{equation} \label{Equation:Cost2Go} 
\begin{aligned}
J_i(x_{0}, t_0) &=\smallint\limits_{{{t}_{0}}}^{\infty }{\left( {{x}^{\intercal}}(t)Qx(t)+u_{i}^{\intercal}(t)R{{u}_{i}}(t) \right)dt} \\ 
& = x{{({{t}_{0}})}^{\intercal}}{{P}_{i}}x({{t}_{0}}) = {{V}_{i}}(x({{t}_{0}})) ,
\end{aligned}
\end{equation} 
where
\begin{equation} \label{Equation:Cost2GoMatrix} 
P_{i} = \smallint\limits_{t_{0} }^{\infty } \left[e^{(A+BK_{i} )(t-t_{0} )} \right]^{\intercal} \left[Q+K_{i}^{\intercal} RK_{i} \right]e^{(A+BK_{i} )(t-t_{0} )}dt.
\end{equation} 

\noindent It follows from \eqref{Equation:Cost2GoMatrix} that $P_{i}$ satisfies the Lyapunov equation
\begin{multline} \label{Equation:PiLyapunov}
{P_{i} (A+BK_{i} )+(A+BK_{i} )^{\intercal} P_{i} } =\\ 
\smallint\limits_{t_{0} }^{\infty }\textstyle\frac{d}{dt} \left(\left[e^{(A+BK_{i} )(t-t_{0} )} \right]^{\intercal} \left[Q+K_{i}^{\intercal} RK_{i} \right]e^{(A+BK_{i} )(t-t_{0} )} \right)dt  \\
=-\left[Q+K_{i}^{\intercal} RK_{i} \right].
\end{multline} 
$V_{i} (x)$ is a Lyapunov function for $u_{i}(x)$ satisfying \eqref{Equation:ViLyapunov}
\begin{subequations} \label{Equation:PolicyIteration}
	\begin{align}
	{\textstyle\frac{dV_{i} }{dx}} ^{\intercal} \left(Ax+Bu_{i} \right)&=-x^{\intercal} Qx-u_{i}^{\intercal} Ru_{i}, \label{Equation:ViLyapunov} \\
	u_{i+1} &=-{\textstyle\frac{1}{2}} R^{-1} B^{\intercal} {\textstyle\frac{dV_{i} }{dx}}, \label{Equation:PolicyUpdate}
	\end{align} 
\end{subequations}
and \eqref{Equation:PolicyUpdate} is the policy iteration update.

It was shown in \cite{Kleinman1968} that iterating on \eqref{Equation:KleinmanIteration} and \eqref{Equation:KleinmanUpdate}, with proper initialization, converges to the stabilizing solution of the algebraic Riccati equation \eqref{Equation:ARE} denoted by $\mathcal{P}$
\begin{subequations} \label{Equation:KleinmanRiccati}
	\begin{gather}
	P_{i} (A+BK_{i} )+(A+BK_{i} )^{\intercal} P_{i} +K_{i}^{\intercal} RK_{i} +Q=0, \label{Equation:KleinmanIteration} \\
	K_{i+1} =-R^{-1} B^{\intercal} P_{i}, \label{Equation:KleinmanUpdate}  \\
	PA+A^{\intercal} P-P^{\intercal} B^{\intercal} R^{-1} B^{\intercal} P+Q=0,  \label{Equation:ARE}
	\end{gather} 
\end{subequations}
where $\mathcal{P}$ is the quadratic value function  matrix
\begin{equation*}
{x({t}_{0})}^{\intercal}\mathcal{P}x(t_0) = \inf\limits_{u} {\smallint\limits_{t_{0} }^{\infty }\left(x^{\intercal} (t)Qx(t)+u^{\intercal} Ru\right)dt  }.
\end{equation*}

\subsection{Response-Based Data-Driven Policy Iterations}  \label{Section:ResponseBased}
Let $u_i$ be the \emph{target policy}, \emph{i.e.} the one we would like to improve by policy iterations and learn its cost-to-go. Let $u_b$ be a \emph{behavior policy}, \emph{i.e.} one generating the system response. In on-policy learning, $u_b = u_i$ while in off-policy learning $u_b \neq u_i$. Off-policy learning enables experience replays whereby available trajectories generated in the past by other policies can be used to improve a target policy.

\subsubsection{On-Policy}
In \cite{Vrabie_2009}, the right-hand-side of \eqref{Equation:Cost2Go} is split into two parts
\begin{equation} \label{Equation:IntegralReward}
\begin{aligned}
{{V}_{i}}(x({{t}_{k}}))  =&\smallint\limits_{{{t}_{k}}}^{{{t}_{k}}+\tau }{\left( {{x}^{\intercal}}(t)Qx(t)+u_{i}^{\intercal}(t)R{{u}_{i}}(t) \right)dt} \\
& +{{V}_{i}}(x({{t}_{k}}+\tau ) ).
\end{aligned}
\end{equation}
Substitute $V_i(x)=x^{\intercal}P_ix$ in \eqref{Equation:IntegralReward} to get
\begin{multline} \label{Equation:PiIntegralReward} 
x{{({{t}_{k}}+\tau )}^{\intercal}}{{P}_{i}}x({{t}_{k}}+\tau )-x{{({{t}_{k}})}^{\intercal}}{{P}_{i}}x({{t}_{k}}) \\ 
=-\smallint\limits_{{{t}_{k}}}^{{{t}_{k}}+\tau }{\left( {{x}^{\intercal}}(t)Qx(t)+u_{i}^{\intercal}(t)R{{u}_{i}}(t) \right)dt}.
\end{multline}
The result is a response-based data-driven approach to solve for $P_i$ given online measurements. This online data, indexed by $k$, has two parts; the first part is state measurements $x(t_{k} )$ and $x(t_{k}+\tau )$ on the left-hand side of \eqref{Equation:PiIntegralReward}; the second is the reinforcement signal experienced over a finite horizon $[t_{k} ,t_{k} +\tau ]$ as represented by the integral on the right-hand side of \eqref{Equation:PiIntegralReward}. This results in a linear system of unknowns, namely the coefficients of a quadratic form, that can be solved for by polynomial regression. Unlike \eqref{Equation:KleinmanIteration}, equation \eqref{Equation:PiIntegralReward} does not require $A$ and $B$ of \eqref{Equation:LinearSystem} to solve for $P_i$. The policy update \eqref{Equation:KleinmanUpdate} still requires the knowledge of $B$.

In \cite{Vrabie_2009}, the linear system of unknowns \eqref{Equation:PiIntegralReward} is written element-wise using Kronecker products such that the unknowns, \textit{i.e.} the elements of $P_i$, form a vector. For the purpose of polynomial regression, dependent regressors should be eliminated, \textit{i.e.} $x_ix_j$ and $x_jx_i$ are dependent. The lower triangular part of $xx^{\intercal} $ represents all independent regressors. In this regard, and using the notation of Section \ref{Section:Notation}, we use a quadratic form $V(x) = {\mathbf{w}}^{\intercal}\phi(x)$, where $\phi(x)=vec^{L} (xx^{\intercal} )$, to solve for $\mathbf{w}$ from measurements by forming the following linear system of unknowns $\mathbf{A}\mathbf{w} = \mathbf{b}$. The measurement matrices $\mathbf{A} \in \mathbb{R}^{N\times \frac{n(n+1)}{2}}$ and $\mathbf{b} \in \mathbb{R}^{N\times 1}$ are
\begin{subequations} \label{Equation:Ab_datum}
	\begin{align}
	\mathbf{A} &=  \left[\begin{array}{l} \label{Equation:A_datum}
	\phi(x_{\tau}[0])^{\intercal} - \phi(x[0])^{\intercal}\\ 
	\vdots\\
	\phi(x_{\tau}[k])^{\intercal} - \phi(x[k])^{\intercal}\\
	\vdots\\
	\phi(x_{\tau}[N-1])^{\intercal} - \phi(x[N-1])^{\intercal}
	\end{array}\right],\\
	\mathbf{b} &=  -\left[\begin{array}{l}  \label{Equation:b_datum}
	r_0, \hdots, r_k, \hdots, r_{N-1}
	\end{array}\right]^{\intercal},\\
	r_k &= \smallint\limits_{{t}_{x[k]}}^{{t}_{x[k]}+\tau }{\left( {{x}^{\intercal}}(t)Qx(t)+u_{i}^{\intercal}(t)R{{u}_{i}}(t) \right)dt}, \label{Equation:r_datum}
	\end{align}
\end{subequations}
where the $k^{th}$ state measurements pair $x_{\tau}[k]$ and $x[k]$ and the reward $r_k$ are such that $r_k$ is integrated over a state-space trajectory segment evolving per \eqref{Equation:StateTrajectories} such that $x_{\tau}[k] = e^{(A+BK_{i} )\tau } x[k]$. Note that unlike \cite{Vrabie_2009}, we do not assume that $x[0],\hdots,x[k],\hdots,x[N-1]$ trace the same state-space trajectory $\varphi(t,x_0,u_{i}(t))$. Note also that if the dependent regressors are included, then the columns of $\mathbf{A}$ will be dependent. Moreover, $N\geq \frac{n(n+1)}{2}$ and the linear system of unknowns \eqref{Equation:Ab_datum} is \emph{consistent}, and $\mathbf{w} = {\mathbf{A}}^{-1}\mathbf{b}$.

\begin{rem} \label{Remark:Collinearity}
	It was suggested in \cite{Vrabie_2009} -- as well as in \cite{MurrayLendaris_ADP_2002} where the special case of letting $\tau =\infty $ in \eqref{Equation:PiIntegralReward} is treated -- that data collection can be along a single state-space trajectory $\varphi(t,x_0,u_{i}(t))$. In that case, the $k^{th}$ row of \eqref{Equation:A_datum} is $\phi(x(t_k+\tau))^{\intercal} - \phi(x(t_k))^{\intercal}$ and $r_k = \int_{{{t}_{k}}}^{{{t}_{k}}+\tau }{\left( {{x}^{\intercal}}(t)Qx(t)+u_{i}^{\intercal}(t)R{{u}_{i}}(t) \right)dt}$ for \eqref{Equation:b_datum}. In Section \ref{Section:NewSolvabilityAnalysis}, we show conditions under which data along a single state-space trajectory is collinear; we suggest an adjusted controller synthesis to avoid data collinearity.
\end{rem}

\subsubsection{Off-Policy}

Given the cost-to-go $V_{i} (x)$ for policy $u_{i} (x)$ as determined by \eqref{Equation:Cost2Go}. Differentiating $V_{i} (x)$ over the trajectories $\varphi(t,x_0,u(t))$ generated by a behavior policy $u(t)=u_b(t)=Fx(t)$ such that 
\begin{equation} \label{Equation:StateTrajectoriesOffPolicy} 
x(t)=e^{(A+BF )(t-t_{0} )} x(t_{0}),
\end{equation} 
we then get
\begin{equation} \label{Equation:Cost2GoOffPolicy} 
\begin{aligned} 
\dot{V}_{i}  &= {\textstyle\frac{dV_{i} }{dx}} ^{\intercal} \left(Ax+Bu\right) \\ 
&= {\textstyle\frac{dV_{i} }{dx}} ^{\intercal} \left(Ax+Bu_{i} \right)+{\textstyle\frac{dV_{i} }{dx}} ^{\intercal} B\Delta(u,u_i) \\ 
&= -x^{\intercal} Qx-u_{i}^{\intercal} Ru_{i} +{\textstyle\frac{dV_{i} }{dx}} ^{\intercal} B\Delta(u,u_i),
\end{aligned} 
\end{equation} 
where $\Delta(u,u_i)=u-u_{i} $. In \cite{AbukhalafLewis_NNHJB2005}, \cref{Equation:Cost2GoOffPolicy} is used to show that $V_{i} $ serves as a Lyapunov function to the dynamics driven by the improved policy $u_{i+1} =-{\textstyle\frac{1}{2}} R^{-1} B^{\intercal} {\textstyle\frac{dV_{i} }{dx}} $, \textit{i.e}. \eqref{Equation:KleinmanUpdate}, namely
\begin{equation*}
\begin{aligned}
\dot{V}_{i}  &= -x^{\intercal} Qx-u_{i}^{\intercal} Ru_{i} +{\textstyle\frac{dV_{i} }{dx}} ^{\intercal} B(u_{i+1} -u_{i} ) \\ 
&= -x^{\intercal} Qx-u_{i}^{\intercal} Ru_{i} -(u_{i+1} -u_{i} )^{\intercal} R(u_{i+1} -u_{i} ).
\end{aligned} 
\end{equation*}

In \cite{Jiang_2012}, \cref{Equation:Cost2GoOffPolicy} is integrated over $\varphi(t,x_0,u(t))$  as follows
\[\smallint\limits_{t_{k} }^{t_{k} +\tau }\dot{V}_{i} dt  = \smallint\limits_{t_{k} }^{t_{k} +\tau }{\textstyle\frac{dV_{i} }{dx}} ^{\intercal} \left(Ax+Bu\right)dt,\]
leading to
\begin{multline} \label{Equation:IntegralRewardOffPolicy}
V_{i} (x(t_{k} +\tau ))-V_{i} (x(t_{k} ))-\smallint\limits_{t_{k} }^{t_{k} +\tau }{\textstyle\frac{dV_{i} }{dx}} ^{\intercal} B\Delta(u,u_i)dt\\ 
=-\smallint\limits _{t_{k} }^{t_{k} +\tau }\left(x^{\intercal} Qx+u_{i}^{\intercal} Ru_{i} \right)dt.
\end{multline}
Thus $V_{i}(x)$ satisfies both \cref{Equation:IntegralReward,Equation:IntegralRewardOffPolicy} where in the former the system is driven by $u(x)=u_b(x)=u_{i}(x) $, and in the latter it is driven by $u(x)=u_b(x)\neq u_i(x)$. In off-policy learning, the behavior policy $u_b$ and thus the closed-loop dynamics can be fixed during target policy iterations.

\begin{rem} \label{Remark:Nonunique}
	In \cite{Jiang_2012}, the policy update relation \eqref{Equation:PolicyUpdate} is embedded in \eqref{Equation:IntegralRewardOffPolicy} to eliminate the explicit knowledge of $B$ requirement and to solve simultaneously for $V_{i} $ and $u_{i+1}$ directly as follows
	\begin{multline} \label{Equation:IntegralRewardOffPolicyNoB} 
	\hat{V}_{i} (x(t_{k} +\tau ))-\hat{V}_{i} (x(t_{k} ))+ 2\smallint\limits _{t_{k} }^{t_{k} +\tau }{\Delta(u,u_i)}^{\intercal} R\hat{u}_{i+1} dt\\ =-\smallint\limits _{t_{k} }^{t_{k} +\tau }\left(x^{\intercal} Qx+u_{i}^{\intercal} Ru_{i} \right)dt.
	\end{multline}
	Reference \cite{Jiang_2012} and subsequent work \cite{Gao_ADPOutputRegulator2016,Gao_ConnectedVehicles2017,Gao_LeaderFormation2019} incorrectly assume that \eqref{Equation:IntegralRewardOffPolicyNoB} has a unique solution pair $\{V_i,u_{i+1}\}$ corresponding to the sequence determined by \eqref{Equation:KleinmanIteration} and \eqref{Equation:KleinmanUpdate}. In Section \ref{Section:NewSolvabilityAnalysis}, we show that \eqref{Equation:IntegralRewardOffPolicy} has a \textit{unique} solution that corresponds to the cost-to-go $V_{i} $ for $u_{i}$, and that  \eqref{Equation:IntegralRewardOffPolicyNoB} has \textit{nonunique} solutions one of them corresponds to  $V_{i}$. Therefore, in our proposed approach to off-policy learning, we require prior knowledge of $B$ and learn using \eqref{Equation:IntegralRewardOffPolicy} not \eqref{Equation:IntegralRewardOffPolicyNoB}.
\end{rem}

Substitute $V_i(x)=x^{\intercal}P_ix$ and $\Delta(u,u_i)=Fx-K_{i}x=L_ix$ in \eqref{Equation:IntegralRewardOffPolicy} to get
\begin{multline} \label{Equation:PiIntegralRewardOffPolicy} 
x(t_{k}+\tau )^{\intercal}{{P}_{i}}x({{t}_{k}}+\tau )-x{{({{t}_{k}})}^{\intercal}}{{P}_{i}}x({{t}_{k}}) \\
-\smallint\limits_{t_{k} }^{t_{k} +\tau }{2x(t)^{\intercal} {P}_{i}} BL_i x(t)dt\\
=-\smallint\limits_{{{t}_{k}}}^{{{t}_{k}}+\tau }{\left( {{x}^{\intercal}}(t)Qx(t)+u_{i}^{\intercal}(t)R{{u}_{i}}(t) \right)dt}.
\end{multline}
The result is a response-based data-driven approach similar to \eqref{Equation:PiIntegralReward} to solve for $P_i$ given online measurements. 

Using the notation of \Cref{Section:Notation}, we use a quadratic form $V(x) = {\mathbf{w}}^{\intercal}\phi(x)$, where $\phi(x)=vec^{L} (xx^{\intercal} )$, and $\textstyle\frac{dV}{dx} = \nabla\phi(x)^{\intercal} {\mathbf{w}}$ to solve for $\mathbf{w}$ from measurements by forming the following linear system of unknowns $\mathbf{A}{\mathbf{w}} = \mathbf{b}$; where $\mathbf{A} \in \mathbb{R}^{N\times \frac{n(n+1)}{2}}$ and $\mathbf{b} \in \mathbb{R}^{N\times 1}$ are such that
\begin{subequations} \label{Equation:Ab_datumOffPolicy}
	\begin{align}
	\mathbf{A} &=  \left[\begin{array}{l} \label{Equation:A_datumOffPolicy}
	\phi(x_{\tau}[0])^{\intercal} - \phi(x[0])^{\intercal} - \delta_{0}\\ 
	\vdots\\
	\phi(x_{\tau}[k])^{\intercal} - \phi(x[k])^{\intercal} - \delta_{k}\\
	\vdots\\
	\phi(x_{\tau}[N-1])^{\intercal} - \phi(x[N-1])^{\intercal} - \delta_{N-1}
	\end{array}\right],\\
	\mathbf{b} &=  -\left[\begin{array}{l}  \label{Equation:b_datumOffPolicy}
	r_0, \hdots, r_k, \hdots, r_{N-1}
	\end{array}\right]^{\intercal},\\
	\delta_{k} &= \smallint\limits_{{t}_{x[k]}}^{{t}_{x[k]}+\tau }{x(t)^{\intercal}L_{i}^{\intercal}B^{\intercal} \nabla\phi(x(t))^{\intercal}} dt, \label{Equation:delta_datumOffPolicy} \\
	r_k &= \smallint\limits_{{t}_{x[k]}}^{{t}_{x[k]}+\tau }{\left( {{x}^{\intercal}}(t)Qx(t)+u_{i}^{\intercal}(t)R{{u}_{i}}(t) \right)dt}, \label{Equation:r_datumOffPolicy}
	\end{align}
\end{subequations}
where the $k^{th}$ state measurements pair $x_{\tau}[k]$ and $x[k]$ and the reward $r_k$ are such that $r_k$ is integrated over a state-space trajectory segment evolving per \eqref{Equation:StateTrajectoriesOffPolicy} such that $x_{\tau}[k] = e^{(A+BF )\tau } x[k]$. We do not assume that $x[0],\hdots,x[k],\hdots,x[N-1]$ trace the same state-space trajectory $\varphi(t,x_0,u(t))$. Moreover, $N\geq \frac{n(n+1)}{2}$ and the linear system of unknowns \eqref{Equation:Ab_datumOffPolicy} is \emph{consistent}.

In off-policy iterations, $x(t)$ evolves per the behavior policy $u_b=Fx$ and is independent of $u_i$. Therefore, closed-loop data collected at iteration $i=0$ can be used in subsequent iterations, \emph{experience replay}, by recomputing $K_i$ dependent terms in \eqref{Equation:Ab_datumOffPolicy}. To do so, re-write \eqref{Equation:delta_datumOffPolicy} and \eqref{Equation:r_datumOffPolicy} by factoring out $K_i$ using $K_i = \sum_{p=1}^{n} \sum_{q=1}^{m} [K_{i}]_{(p,q)} \mathbf{e}_{pq}$ such that,
\begin{subequations} \label{Equation:deltaBreak}
	\begin{align}
	\delta_{k} =& \delta_k(F) - \delta_{k}(K_i), \label{Equation:deltaAll}\\
	\delta_{k}(F) =& \smallint\limits_{{t}_{x[k]}}^{{t}_{x[k]}+\tau }{u(x(t))^{\intercal}B^{\intercal} \nabla\phi(x(t))^{\intercal}} dt, \label{Equation:deltaF}\\
	\delta_k(\mathbf{e}_{pq}) =& \smallint\limits_{{t}_{x[k]}}^{{t}_{x[k]}+\tau }{x(t)^{\intercal}\mathbf{e}_{pq}^{\intercal}B^{\intercal} \nabla\phi(x(t))^{\intercal}} dt, \label{Equation:deltae}\\
	\delta_{k}(K_i) =& \sum_{p=1}^{n} \sum_{q=1}^{m} [K_{i}]_{(p,q)} \delta_k (\mathbf{e}_{pq}), \label{Equation:deltaKi}
	\end{align}
\end{subequations}

\begin{subequations} \label{Equation:rBreak}
	\begin{align}
	r_{k} =& r_{k}(Q) + r_{k}(K_i), \label{Equation:rAll}\\
	r_{k}(Q) =&  \smallint\limits_{{t}_{x[k]}}^{{t}_{x[k]}+\tau }{ x(t)^{\intercal} Q x(t)  dt}, \label{Equation:rQ}\\
	r(\mathbf{e}_{pq}) =& \smallint\limits_{{t}_{x[k]}}^{{t}_{x[k]}+\tau }{ x(t)^{\intercal} \mathbf{e}_{pq} x(t)  dt},\; \mathbf{e}_{pq} \in \mathbb{R}^{n \times n}, \label{Equation:re}\\
	r_{k}(K_i) =&  \sum_{p=1}^{n} \sum_{q=1}^{n} [K_i^{\intercal}RK_i]_{(p,q)} r (\mathbf{e}_{pq}). \label{Equation:rKi}
	\end{align}
\end{subequations}

The $K_i$ dependent terms are \eqref{Equation:deltaKi} and \eqref{Equation:rKi}. Note that $\mathbf{e}_{ij}$ is defined in \Cref{Section:Notation}. If ${\mathbf{A}}$ in \eqref{Equation:A_datumOffPolicy} remains invertible for the recomputed $K_i$ dependent terms, then there is no need for new closed-loop data. Otherwise, new data is needed such that the choice of $x[k]$ measurements ensures ${\mathbf{A}}$ in \eqref{Equation:A_datumOffPolicy} is invertible. Note that in \eqref{Equation:deltaF}, $u(x)=Fx$ and thus knowledge of $F$ is not required as long as $u(x)$ is measurable.

\section{Policy Iterations: New Solvability Analysis} \label{Section:NewSolvabilityAnalysis}

In this section, we address the issues raised by \Cref{Remark:Collinearity} and \Cref{Remark:Nonunique}. \Cref{Theorem:UniqueNonunique} addresses \Cref{Remark:Nonunique}.

\begin{thm} \label{Theorem:UniqueNonunique}
	Let $K_{i}$ be fixed and such that $A+BK_{i} $ is Hurwitz. Assume $u(x)=Fx$ drives \eqref{Equation:LinearSystem} such that $x(t)=e^{(A+BF)(t-t_{k})} x(t_{k} )$. Let $\Delta(u,u_i)=Fx-K_{i}x=L_ix$. It follows that:
	\begin{enumerate}	
		\item[A.]  $\forall F$, there exists a \emph{unique} solution $\hat{V}(x)$ to the integral equation
		\begin{subequations} \label{Equation:UniqueIntegroDifferentialSystem}
			\begin{align}
			\begin{split} \label{Equation:UniqueIntegroDifferential}
			\hat{V}(x({{t}_{k}}+\tau ))&-\hat{V}(x({{t}_{k}}))-\smallint\limits_{{{t}_{k}}}^{{{t}_{k}}+\tau }{{{\Delta(u,u_i)}^{\intercal}}{{B}^{\intercal}}\tfrac{d\hat{V}}{dx}dt}\\
			&=-\smallint\limits_{{{t}_{k}}}^{{{t}_{k}}+\tau }{\left( {{x}^{\intercal}}Qx+u_{i}^{\intercal}R{{u}_{i}} \right)dt},
			\end{split}\\
			\hat{V}(0)&=0, \label{Equation:UniqueIntegroDifferentialBoundary}
			\end{align}
		\end{subequations} 
		with integration terms carried over $\varphi(t,t_{k} ,x(t_{k} ),u=Fx)$. The solution is given by $\hat{V}(x)=x^{\intercal} P_{i} x$, where $P_{i} $ is defined by \eqref{Equation:Cost2GoMatrix} and is the unique solution to \eqref{Equation:KleinmanIteration} for the associated $K_{i} $.
		\item[B.]  $\forall F$, there exists a \emph{nonunique} solution pair $\{\hat{V}(x),\hat{u}(x)\}$ to the integral equation
		\begin{subequations} \label{Equation:NonUniqueIntegroDifferentialSystem}
			\begin{align}
			\begin{split} \label{Equation:NonUniqueIntegroDifferential} 
			\hat{V}(x({{t}_{k}}+\tau ))&-\hat{V}(x({{t}_{k}}))+2\smallint\limits_{{{t}_{k}}}^{{{t}_{k}}+\tau }{{{\Delta(u,u_i)}^{\intercal}}R\hat{u}dt}\\
			&=-\smallint\limits_{{{t}_{k}}}^{{{t}_{k}}+\tau }{\left( {{x}^{\intercal}}Qx+u_{i}^{\intercal}R{{u}_{i}} \right)dt},
			\end{split}\\
			\hat{V}(0)&=0,\ \hat{u}(0)=0. \label{Equation:NonUniqueIntegroDifferentialBoundary}
			\end{align}			
		\end{subequations}
	\end{enumerate}
\end{thm}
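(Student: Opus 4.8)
The plan is to collapse both integral equations onto pointwise matrix identities and then compare a count of equations against a count of unknowns. Since \eqref{Equation:UniqueIntegroDifferentialSystem} and \eqref{Equation:NonUniqueIntegroDifferentialSystem} must hold over \emph{every} segment $\varphi(t,t_k,x(t_k),u=Fx)$, I would substitute the quadratic ansatz $\hat{V}(x)=x^{\intercal}Px$, use $\frac{d}{dt}(x^{\intercal}Px)=x^{\intercal}\!\left[P(A+BF)+(A+BF)^{\intercal}P\right]x$ along the $F$-trajectory to turn the boundary difference $\hat{V}(x(t_k+\tau))-\hat{V}(x(t_k))$ into an integral, then divide by $\tau$ and let $\tau\to 0$. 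Because the initial point $x(t_k)$ ranges over $\mathbb{R}^{n}$, matching the resulting integrands forces a quadratic identity valid for all $x$, hence a matrix equation. This reduction is the step I expect to be the main obstacle to make fully rigorous: one must justify passing the limit inside the integral and argue that the trajectory segments sweep enough of the state space to force the \emph{integrands}, not merely their integrals, to agree.

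For Part A, carrying out this substitution in \eqref{Equation:UniqueIntegroDifferential} exposes the key cancellation. The boundary term contributes $P(A+BF)+(A+BF)^{\intercal}P$, while the off-policy correction $-\smallint\Delta(u,u_i)^{\intercal}B^{\intercal}\tfrac{d\hat{V}}{dx}\,dt$ contributes $-\left(PBL_i+L_i^{\intercal}B^{\intercal}P\right)$ with $L_i=F-K_i$; the $BF$ terms cancel, the behavior gain $F$ disappears entirely, and one is left with the Lyapunov identity $P(A+BK_i)+(A+BK_i)^{\intercal}P+K_i^{\intercal}RK_i+Q=0$, i.e. \eqref{Equation:KleinmanIteration}. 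Since $A+BK_i$ is Hurwitz by hypothesis, this equation has a unique solution, namely the $P_i$ of \eqref{Equation:Cost2GoMatrix}; existence is already secured because $V_i=x^{\intercal}P_ix$ satisfies \eqref{Equation:IntegralRewardOffPolicy}. This proves A and, crucially, shows the solution is independent of $F$.

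For Part B, I would substitute $\hat{V}=x^{\intercal}Px$ together with the linear candidate $\hat{u}=\hat{K}x$ into \eqref{Equation:NonUniqueIntegroDifferential}; the same reduction now yields the single symmetric matrix equation $P(A+BF)+(A+BF)^{\intercal}P+L_i^{\intercal}R\hat{K}+\hat{K}^{\intercal}RL_i+Q+K_i^{\intercal}RK_i=0$, in which, unlike Part A, the $F$-dependent terms survive and the update gain $\hat{K}$ enters as a free unknown. I would first verify that the intended pair $\{P_i,\ K_{i+1}=-R^{-1}B^{\intercal}P_i\}$ solves it, reusing \eqref{Equation:KleinmanIteration}, and then establish nonuniqueness by a dimension count. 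The residual map $(\tilde{P},\tilde{K})\mapsto \tilde{P}(A+BF)+(A+BF)^{\intercal}\tilde{P}+L_i^{\intercal}R\tilde{K}+\tilde{K}^{\intercal}RL_i$ sends the $\big(\tfrac{n(n+1)}{2}+mn\big)$-dimensional space of symmetric $\tilde{P}$ and full $\tilde{K}$ into the $\tfrac{n(n+1)}{2}$-dimensional space of symmetric matrices. Because the domain has strictly larger dimension (by $mn\geq 1$), rank--nullity forces a nontrivial kernel; adding any kernel element to $\{P_i,K_{i+1}\}$ gives a distinct pair that still satisfies $\hat{V}(0)=0$ and $\hat{u}(0)=0$. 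Hence \eqref{Equation:NonUniqueIntegroDifferentialSystem} admits an affine family of solutions of dimension at least $mn$, which establishes nonuniqueness and sharply contrasts with the single equation pinning down the single unknown in Part A.
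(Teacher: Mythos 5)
Your Part A is essentially the paper's own argument. The paper likewise restricts to the quadratic ansatz $\hat{V}(x)=x^{\intercal}\hat{P}x$, uses the fact that the equation holds for all initial states $x(t_k)$ to pass from a scalar identity to the matrix identity \eqref{Equation:SolvabiliyUniqueIntegroDifferentialNox}, and then matches Taylor coefficients in $\tau$ at $\tau=0$ — which is exactly your divide-by-$\tau$ limit. The step you flagged as the ``main obstacle'' is not one: quantification over all $x(t_k)$ forces the symmetric parts of the coefficient matrices to agree (no sweeping argument is needed beyond $x^{\intercal}Mx=0\ \forall x \Rightarrow M+M^{\intercal}=0$), and differentiating the integral in $\tau$ is the fundamental theorem of calculus; the paper handles this via smoothness/analyticity of both sides in $\tau$. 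The cancellation of $F$ after substituting $L_i=F-K_i$, and uniqueness of the solution of \eqref{Equation:KleinmanIteration} under the Hurwitz hypothesis, are identical in both proofs. Note that, like the paper, you obtain uniqueness only within the class of quadratic forms.

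Part B is where you genuinely diverge, and your route is sound but non-constructive. The paper proves nonuniqueness by explicit construction: it decomposes \eqref{Equation:FirstTermTaylorCoefNonUnique} into a coupled Sylvester equation and Sylvester-transpose equation \eqref{Equation:Sylvesters}, then splits into cases — if $A+BF$ and $-(A+BF)$ share an eigenvalue, it perturbs $\hat{P}=P_i+vw^{\intercal}$ by a rank-one kernel element of $\mathcal{T}_1$ while keeping $\hat{K}=-R^{-1}B^{\intercal}P_i$; otherwise it chooses $W_1=-K_i^{\intercal}RK_i-Q$, which forces $W_2=0$ and yields the pair $\{\hat{P}=P_{W_1},\hat{K}=0\}$. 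Your rank--nullity count on the combined map $(\tilde{P},\tilde{K})\mapsto \tilde{P}(A+BF)+(A+BF)^{\intercal}\tilde{P}+L_i^{\intercal}R\tilde{K}+\tilde{K}^{\intercal}RL_i$ (domain dimension $\tfrac{n(n+1)}{2}+mn$, codomain dimension $\tfrac{n(n+1)}{2}$) is correct and buys uniformity: no case analysis, and a quantitative lower bound $mn$ on the dimension of the affine solution family. What the paper's construction buys in exchange is explicitness — in particular it exhibits $\hat{K}=0$ (a vacuous policy update) as a spurious solution, which is the practically alarming instance motivating \Cref{Remark:Nonunique}. One step you must state to close your argument: the converse implication that any pair $\{\hat{P},\hat{K}\}$ solving the matrix equation solves the integral equation \eqref{Equation:NonUniqueIntegroDifferential}. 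The paper does this by rewriting \eqref{Equation:FirstTermTaylorCoefNonUnique} as $\dot{\hat{V}}+2\hat{u}^{\intercal}R\Delta(u,u_i)=-x^{\intercal}Qx-u_i^{\intercal}Ru_i$ along $\varphi(t,t_{k},x(t_{k}),u=Fx)$ and integrating from $t_k$ to $t_k+\tau$. Without that line, your kernel perturbations are only shown to solve the matrix equation, not \eqref{Equation:NonUniqueIntegroDifferentialSystem}; the fix is immediate, but it is a logically necessary direction since your reduction only established the forward implication.
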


\begin{proof}
	Existence of solutions to \eqref{Equation:UniqueIntegroDifferentialSystem}, follows from the fact that $\hat{V}(x)=x^{\intercal} P_ix$ is a solution as it an be differentiated over $\varphi(t,t_{k} ,x(t_{k} ),u=Fx)$ as in \eqref{Equation:Cost2GoOffPolicy} then integrated as in \eqref{Equation:IntegralRewardOffPolicy}, where $P_i$ is defined by \eqref{Equation:Cost2GoMatrix}. To show uniqueness, assume there is a solution given by $\hat{V}(x)=x^{\intercal} \hat{P}x$ and substitute it together with $x(t_{k} +\tau )=e^{(A+BF)\tau } x(t_{k} )$ in \eqref{Equation:UniqueIntegroDifferential} to get
	\begin{multline} \label{Equation:SolvabiliyUniqueIntegroDifferential} 
	x{{({{t}_{k}})}^{\intercal}}{{\left( {{\operatorname{e}}^{A_F\tau }} \right)}^{\intercal}}\hat{P}{{\operatorname{e}}^{A_F\tau }}x({{t}_{k}})-x{{({{t}_{k}})}^{\intercal}}\hat{P}x({{t}_{k}}) \\ 
	-2x{{({{t}_{k}})}^{\intercal}}\smallint\limits_{{{t}_{k}}}^{{{t}_{k}}+\tau }{{{\left( {{\operatorname{e}}^{A_F(t-{{t}_{k}})}} \right)}^{\intercal}}{{L}_{i}^{\intercal}}{{B}^{\intercal}}\hat{P}{{\operatorname{e}}^{A_F(t-{{t}_{k}})}}dt}x({{t}_{k}})= \\ 
	-x{{({{t}_{k}})}^{\intercal}}\smallint\limits_{{{t}_{k}}}^{{{t}_{k}}+\tau }{ {{\left( {{\operatorname{e}}^{A_F(t-{{t}_{k}})}} \right)}^{\intercal}}\left[ K_{i}^{\intercal}R{{K}_{i}}+Q \right]{{\operatorname{e}}^{A_F(t-{{t}_{k}})}} dt} x({{t}_{k}}),
	\end{multline}
	
	\noindent where $L=F-K_{i} $ and $A_F = A+BF$. Since \eqref{Equation:SolvabiliyUniqueIntegroDifferential} is valid for all $x(t_{k} )$, we have
	\begin{multline} \label{Equation:SolvabiliyUniqueIntegroDifferentialNox} 
	{{\left( {{\operatorname{e}}^{A_F\tau }} \right)}^{\intercal}}\hat{P}{{\operatorname{e}}^{A_F\tau }}-\hat{P} \\ 
	-\smallint\limits_{{{t}_{k}}}^{{{t}_{k}}+\tau }{{{\left( {{\operatorname{e}}^{A_F(t-{{t}_{k}})}} \right)}^{\intercal}}\left[ {{L}_{i}^{\intercal}}{{B}^{\intercal}}\hat{P}+\hat{P}B{L}_{i} \right]{{\operatorname{e}}^{A_F(t-{{t}_{k}})}}dt} \\ 
	=-\smallint\limits_{{{t}_{k}}}^{{{t}_{k}}+\tau }{ {{\left( {{\operatorname{e}}^{A_F(t-{{t}_{k}})}} \right)}^{\intercal}}\left[ K_{i}^{\intercal}R{{K}_{i}}+Q \right]{{\operatorname{e}}^{A_F(t-{{t}_{k}})}} dt}.
	\end{multline}
	
	Since the right-hand side and left-hand side of \eqref{Equation:SolvabiliyUniqueIntegroDifferentialNox} are smooth and analytic, their Taylor series expansions must be equal. Differentiating \eqref{Equation:SolvabiliyUniqueIntegroDifferentialNox} once with respect to $\tau $ we get the first term of the Taylor series,
	\begin{multline} \label{Equation:FirstTermTaylor}
	{{\left( {{\operatorname{e}}^{(A+BF)\tau }} \right)}^{\intercal}}{{(A+BF)}^{\intercal}}\hat{P}{{\operatorname{e}}^{(A+BF)\tau }}\\
	+{{\left( {{\operatorname{e}}^{(A+BF)\tau }} \right)}^{\intercal}}\hat{P}(A+BF){{\operatorname{e}}^{(A+BF)\tau }} \\ 
	-{{\left( {{\operatorname{e}}^{(A+BF)\tau }} \right)}^{\intercal}}\left[ {{L}_{i}^{\intercal}}{{B}^{\intercal}}\hat{P}+\hat{P}B{L}_{i} \right]{{\operatorname{e}}^{(A+BF)\tau }} \\ 
	=- {{\left( {{\operatorname{e}}^{(A+BF)\tau }} \right)}^{\intercal}}\left[ K_{i}^{\intercal}R{{K}_{i}}+Q \right]{{\operatorname{e}}^{(A+BF)\tau }}.
	\end{multline} 
	\noindent Setting $\tau =0$ in \eqref{Equation:FirstTermTaylor}, this shows that $\hat{P}$ must satisfy
	\begin{multline} \label{Equation:FirstTermTaylorCoef} 
	(A+BF)^{\intercal} \hat{P}+\hat{P}(A+BF)-{L}_{i}^{\intercal} B^{\intercal} \hat{P}-\hat{P}B{L}_{i}\\=-K_{i}^{\intercal} RK_{i} -Q.
	\end{multline} 
	
	\noindent Substitute $L_i=F - K_{i}$ in \eqref{Equation:FirstTermTaylorCoef} and canceling common terms, it follows that $\hat{P}$ must be a solution to $\hat{P} (A+BK_{i} )+(A+BK_{i} )^{\intercal} \hat{P} +K_{i}^{\intercal} RK_{i} +Q=0$ which due to the Hurwitz condition on $A+BK_{i}$ must have a unique solution which is $P_i$ per \eqref{Equation:KleinmanIteration}. Hence, $\hat{P} = P_i$.
	
	Existence of solutions to \labelcref{Equation:NonUniqueIntegroDifferentialSystem} follows directly from substituting $\hat{u}(x)=-R^{-1} B^{\intercal} P_{i}$ in \labelcref{Equation:NonUniqueIntegroDifferential} to get \labelcref{Equation:UniqueIntegroDifferential} and then substituting $\hat{V}(x)=x^{\intercal} P_ix$, thus $\{\hat{V}(x)=x^{\intercal} P_ix, \hat{u}(x)=-R^{-1} B^{\intercal} P_{i}\}$ is a solution pair. It remains to show nonuniqueness which we accomplish by constructing other solutions. Assume there is a solution given by $\hat{V}(x)=x^{\intercal} \hat{P}x$ and $\hat{u}(x)=\hat{K}x$. Substitute it together with $x(t_{k} +\tau )=e^{(A+BF)\tau } x(t_{k} )$ in \eqref{Equation:NonUniqueIntegroDifferential} to get
	\begin{multline} \label{Equation:SolvabiliyNonUniqueIntegroDifferential} 
	x{{({{t}_{k}})}^{\intercal}}{{\left( {{\operatorname{e}}^{A_F\tau }} \right)}^{\intercal}}\hat{P}{{\operatorname{e}}^{A_F\tau }}x({{t}_{k}})-x{{({{t}_{k}})}^{\intercal}}\hat{P}x({{t}_{k}}) \\ 
	+2x{{({{t}_{k}})}^{\intercal}}\smallint\limits_{{{t}_{k}}}^{{{t}_{k}}+\tau }{{{\left( {{\operatorname{e}}^{A_F(t-{{t}_{k}})}} \right)}^{\intercal}}{L}_{i}^{\intercal} R\hat{K}{{\operatorname{e}}^{A_F(t-{{t}_{k}})}}dt}x({{t}_{k}}) =-\\ 
	x{{({{t}_{k}})}^{\intercal}}\smallint\limits_{{{t}_{k}}}^{{{t}_{k}}+\tau }{ {{\left( {{\operatorname{e}}^{A_F(t-{{t}_{k}})}} \right)}^{\intercal}}\left[ K_{i}^{\intercal}R{{K}_{i}}+Q \right]{{\operatorname{e}}^{A_F(t-{{t}_{k}})}} dt} x({{t}_{k}}),
	\end{multline} 
	
	\noindent where $L_i=F-K_{i} $ and $A_F = A+BF$. Similar to the steps done for \labelcref{Equation:SolvabiliyUniqueIntegroDifferential,Equation:SolvabiliyUniqueIntegroDifferentialNox}, $\hat{P}$ and $\hat{K}$ must satisfy
	\begin{multline} \label{Equation:FirstTermTaylorCoefNonUnique} 
	(A+BF)^{\intercal} \hat{P}+\hat{P}(A+BF)+{L}_{i}^{\intercal} R\hat{K}+\hat{K}^{\intercal} R{L}_{i}\\=-K_{i}^{\intercal} RK_{i} -Q.
	\end{multline}
	Conversely, if $\{\hat{P},\hat{K}\}$ is a solution to \eqref{Equation:FirstTermTaylorCoefNonUnique}, then $\{\hat{V}(x)=x^{\intercal} \hat{P}x,\hat{u}(x)=\hat{K}x\}$ is a solution to \eqref{Equation:NonUniqueIntegroDifferential}. To see this, note that \eqref{Equation:FirstTermTaylorCoefNonUnique} can be written as $\dot{\hat{V}} + 2\textstyle\hat{u}^{\intercal}R\Delta(u,u_i) = -x^{\intercal} Qx-u_{i}^{\intercal} Ru_{i}$, where the time derivative is over $\varphi(t,t_{k} ,x(t_{k} ),u=Fx)$, which can then be integrated from $t_k$ to $t_k+\tau$ over the trajectories $\varphi(t,t_{k} ,x(t_{k} ),u=Fx)$ to get \eqref{Equation:NonUniqueIntegroDifferentialSystem}. Therefore, it is sufficient to show that \eqref{Equation:FirstTermTaylorCoefNonUnique} has nonunique solutions.
	
	Equation \eqref{Equation:FirstTermTaylorCoefNonUnique} can be decomposed into the following
	\begin{subequations} \label{Equation:Sylvesters}
		\begin{align} 
		(A+BF)^{\intercal} \hat{P}+\hat{P}(A+BF)&=W_{1} , \label{Equation:Sylvester}   \\
		{L}_{i}^{\intercal} R\hat{K}+\hat{K}^{\intercal} R{L}_{i}&=W_{2} , \label{Equation:SylvesterTranspose} \\
		-W_{1} -K_{i}^{\intercal} RK_{i} -Q &= W_{2}\label{Equation:CouplingSylvester},
		\end{align}
	\end{subequations}
	
	\noindent where \eqref{Equation:Sylvester} is a Sylvester equation to solve for $\hat{P}\in R^{n\times n} $ and \eqref{Equation:SylvesterTranspose} is a Sylvester-transpose equation to solve for $\hat{K}\in R^{m\times n} $; both Sylvester equations are coupled by \eqref{Equation:CouplingSylvester}. The Sylvester equations correspond to linear maps $\mathcal{T}_{1} :\hat{P}\to W_1$ and $\mathcal{T}_{2} :\hat{K}\to W_2$, thus $W_1\in \text{Im} \,\mathcal{T}_{1}$ and $W_2\in \text{Im} \,\mathcal{T}_{2}$. Existence of solutions to the Sylvester-transpose \eqref{Equation:SylvesterTranspose} is discussed in \cite{Wimmer_RothTheorem1994,WuZhang_ComplexConjugateMatrixEquations2017}. To construct solutions to \eqref{Equation:FirstTermTaylorCoefNonUnique} from \eqref{Equation:Sylvesters}, first choose $W_1\in \text{Im} \,\mathcal{T}_{1}$ such that $W_2\in \text{Im} \,\mathcal{T}_{2}$ via \eqref{Equation:CouplingSylvester}; then, for the chosen $W_1$ and $W_2$ solve \eqref{Equation:Sylvester} and \eqref{Equation:SylvesterTranspose} separately to find a solution pair $\{\hat{P},\hat{K}\}$. Conversely, given a solution $\{\hat{P},\hat{K}\}$ to \eqref{Equation:FirstTermTaylorCoefNonUnique}, then $\{\hat{P},\hat{K}\}$ satisfies \eqref{Equation:Sylvesters} for an appropriate $W_1$ and $W_2$.
	
	Note that $\forall F$, $\{\hat{P}=P_{i} ,\hat{K}=-R^{-1} B^{\intercal} P_{i} \}$ is a solution to \eqref{Equation:FirstTermTaylorCoefNonUnique}, thus to \labelcref{Equation:Sylvesters}, resulting in $W_{1} ={L}_{i}^{\intercal} B^{\intercal} P_{i} +P_{i} B{L}_{i}-K_{i}^{\intercal} RK_{i} -Q$ and $W_{2} =-{L}_{i}^{\intercal} B^{\intercal} P_{i} -P_{i} B{L}_{i}$ where $L_i=F-K_{i} $, and where both $W_1\in \text{Im} \,\mathcal{T}_{1}$ and $W_2\in \text{Im} \,\mathcal{T}_{2}$. We can construct additional solutions to \labelcref{Equation:Sylvesters}, and thus to \eqref{Equation:FirstTermTaylorCoefNonUnique}, as follows:
	
	\paragraph{Common eigenvalue} If $A+BF$ and $-(A+BF)$ have a common eigenvalue $\lambda $, choose $W_{1} =L^{\intercal} B^{\intercal} P_{i} +P_{i} BL-K_{i}^{\intercal} RK_{i} -Q$ and thus $W_{2} =-{L}_{i}^{\intercal} B^{\intercal} P_{i} -P_{i} B{L}_{i}$ which as noted earlier are $W_1\in \text{Im} \,\mathcal{T}_{1}$ and $W_2\in \text{Im} \,\mathcal{T}_{2}$. Clearly, $\hat{K}=-R^{-1} B^{\intercal} P_{i}$ solves \eqref{Equation:SylvesterTranspose}. Let $w^{\intercal} (A+BF)=-\lambda w^{\intercal} $ and $(A+BF)^{\intercal} v=\lambda v$. It follows that $\hat{P}=P_{i} +vw^{\intercal} $ is a solution to \eqref{Equation:Sylvester} where $vw^{\intercal} \in \ker \mathcal{T}_{1}$. Thus $\{\hat{P}=P_{i} +vw^{\intercal} ,\hat{K}=-R^{-1} B^{\intercal} P_{i}\}$ is a solution to \eqref{Equation:FirstTermTaylorCoefNonUnique}. 
	\paragraph{No common eigenvalue} If $A+BF$ and $-(A+BF)$ have no common eigenvalue, then $\forall W_{1} $ there exists a unique solution to \eqref{Equation:Sylvester}. Let $W_{1} =-K_{i}^{\intercal} RK_{i} -Q$ and let $P_{W_{1} } $ be the associated unique solution to \eqref{Equation:Sylvester}. If for example $A+BF$ is Hurwitz, then 
	\begin{equation*}
	P_{W_{1}} = \smallint\limits_{t_{0} }^{\infty } \left[e^{(A+BF )(t-t_{0} )} \right]^{\intercal} \left[Q+K_{i}^{\intercal} RK_{i} \right]e^{(A+BF )(t-t_{0} )}  dt.
	\end{equation*}
	
	\noindent From $W_{1} $, it follows that $W_{2} =0$ and thus a solution to \eqref{Equation:SylvesterTranspose} would be such that $\hat{K}\in \ker \mathcal{T}_{2}$. Thus $\{\hat{P}=P_{W_{1} } ,\hat{K}=0\}$ is a solution to \eqref{Equation:Sylvesters}, and thus to \eqref{Equation:FirstTermTaylorCoefNonUnique}.
	
\end{proof}

The following results are in relation to the choice of data $x(t_k)$ and $x(t_k+\tau)$ used to solve the linear system \eqref{Equation:Ab_datum}. In particular we address the data collinearity issue raised in \Cref{Remark:Collinearity} and analyze the data-driven computational scheme to dissect the role of exploitation from that of exploration.

\begin{defn} \label{Definition:Spectrum}
	The \emph{spectrum} of a square matrix $A$ is
	\[\Lambda(A) \buildrel\Delta\over= \{\forall \lambda: \det(\lambda I - A)=0 \}.\]
\end{defn}

\begin{defn} \label{Definition:LinearIndependence}
	Let $\sigma(x) = [\sigma_1(x),\hdots,\sigma_N(x)]^{\intercal}$ where $\sigma_i(x):\mathbb{R}^n \rightarrow \mathbb{R}$, and $N\geq 2$. Let  $\Gamma=\{a \in \mathbb{R}^{N \times 1}: a_i \neq 0, a_j \neq 0, i \neq j \}$. The set of $N$ functions $\sigma_i(x)$ is \emph{dependent} iff
	\[\exists w \in \Gamma, \forall x \in \mathbb{R}^n: \sigma(x)^{\intercal} w = 0,\]
	and is \emph{independent} iff 
	\[\forall w \in \Gamma, \exists x \in \mathbb{R}^n: \sigma(x)^{\intercal} w \neq 0.\]
	
\end{defn}

\begin{lem} \label{Lemma:UnitVectors}
	$\exists X=[x(1),\hdots,x(N)] \in \mathbb{R}^{n\times N}$ such that $\Phi=[\phi(x(1)),\hdots,\phi(x(N))] \in \mathbb{R}^{N\times N}$ is full rank, where $\phi(x(\cdot)) = vec^{L}(x(\cdot)x(\cdot)^{\intercal})$ and $N = \frac{n(n+1)}{2}$.
\end{lem}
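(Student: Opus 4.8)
The plan is to read off the structure of $\phi$ and then exhibit the $N$ points explicitly. By definition $\phi(x)=vec^{L}(xx^{\intercal})$ collects the lower-triangular entries of $xx^{\intercal}$, i.e. the $N=\frac{n(n+1)}{2}$ distinct quadratic monomials $x_i x_j$ with $i\geq j$, placed in the coordinate indexed by $\kappa(i,j,n)$. Thus $\Phi$ is the matrix whose $(\kappa(i,j,n),k)$ entry is the value of the monomial $x_i x_j$ at the point $x(k)$, and full rank of $\Phi$ is exactly the statement that these $N$ monomials can be made linearly independent by evaluation at $N$ suitable points. Rather than appeal to the general fact that $N$ linearly independent functions always admit $N$ evaluation points giving a nonsingular matrix (provable by the usual cofactor induction), I would give a direct construction, since the monomials here are especially well structured.

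I would take as the $N$ points the standard basis vectors $e_1,\dots,e_n\in\mathbb{R}^n$ together with the $\binom{n}{2}$ sums $e_i+e_j$ for $i>j$. The count matches: $n+\binom{n}{2}=\frac{n(n+1)}{2}=N$. The key observations are elementary. First, $e_\ell e_\ell^{\intercal}$ has a single nonzero entry, so $\phi(e_\ell)$ equals the standard unit vector of $\mathbb{R}^N$ supported at position $\kappa(\ell,\ell,n)$. Second, for $i>j$ the matrix $(e_i+e_j)(e_i+e_j)^{\intercal}$ has value $1$ at the diagonal slots $(i,i)$ and $(j,j)$ and at the off-diagonal slot $(i,j)$, so $\phi(e_i+e_j)$ is the sum of the three unit vectors supported at $\kappa(i,i,n)$, $\kappa(j,j,n)$, and $\kappa(i,j,n)$.

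Invertibility then follows by column reduction, which does not change rank. The $n$ columns $\phi(e_\ell)$ already supply the unit vectors at all diagonal positions $\kappa(\ell,\ell,n)$. For each pair $i>j$ I would subtract the (already available) columns $\phi(e_i)$ and $\phi(e_j)$ from $\phi(e_i+e_j)$; the diagonal contributions cancel and what remains is exactly the unit vector supported at the off-diagonal position $\kappa(i,j,n)$. After these operations the columns of $\Phi$ are precisely the $N$ distinct standard unit vectors of $\mathbb{R}^N$, one for each admissible index pair $(i,j)$ with $i\geq j$, so the reduced matrix is a permutation of $\mathbf{I}_N$ and hence nonsingular. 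Therefore $\Phi$ has full rank $N$, and this particular $X=[x(1),\dots,x(N)]$ witnesses the claim.

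I do not expect a genuine obstacle here: the content is entirely in recognizing $\phi$ as the vector of distinct quadratic monomials and in the bookkeeping of the index map $\kappa(i,j,n)$ that locates each monomial inside $vec^{L}$. The only points that warrant care are verifying that $(e_i+e_j)(e_i+e_j)^{\intercal}$ contributes exactly $1$ (not $2$) at the off-diagonal slot and $0$ at every slot not indexed by $i$ or $j$, and checking that the three nonzero positions $\kappa(i,i,n)$, $\kappa(j,j,n)$, $\kappa(i,j,n)$ are distinct so that the column subtraction isolates a single coordinate. Both are immediate from the definitions in Section \ref{Section:Notation}.
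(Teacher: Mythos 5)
Your proof is correct, and it is essentially the construction the paper intends: the lemma's very label (``UnitVectors'') refers to exactly this choice of data --- the standard basis vectors $e_\ell$ giving the diagonal monomials and the pairwise sums $e_i+e_j$ isolating, after column reduction, the off-diagonal monomials --- yielding a column-equivalent permutation of $\mathbf{I}_N$. No gaps; the two points you flag (the off-diagonal entry being $1$ rather than $2$ under $vec^{L}$, and the distinctness of the three indices $\kappa(i,i,n)$, $\kappa(j,j,n)$, $\kappa(i,j,n)$) are handled correctly.
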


\begin{cor} \label{Corollary:LinearIndependence}
	Let $\phi(x) = vec^{L}(xx^{\intercal})$ where $x \in \mathbb{R}^n$. The set of $N= \frac{n(n+1)}{2}$ functions $\phi_i(x)$ is linearly independent.	
\end{cor}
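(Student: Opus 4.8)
The plan is to obtain the corollary as an immediate consequence of \Cref{Lemma:UnitVectors}, arguing by contradiction against the negation of independence in \Cref{Definition:LinearIndependence}. First I would record the structure of the definition: taking $\sigma = \phi$, the set $\{\phi_i\}$ fails to be independent precisely when it is \emph{dependent}, i.e. when there exists $w \in \Gamma$ with $\phi(x)^\intercal w = 0$ for \emph{all} $x \in \mathbb{R}^n$. The key elementary observation is that $\Gamma \subseteq \mathbb{R}^{N}\setminus\{0\}$: by definition each $a \in \Gamma$ has two entries $a_i, a_j$ with $i \neq j$ that are nonzero, so in particular $a \neq 0$.

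Next I would invoke \Cref{Lemma:UnitVectors} to fix points $x(1),\dots,x(N) \in \mathbb{R}^n$, with $N = \frac{n(n+1)}{2}$, for which the matrix $\Phi = [\phi(x(1)),\dots,\phi(x(N))] \in \mathbb{R}^{N\times N}$ has full rank and is therefore invertible; hence $\Phi^\intercal$ is invertible as well. Suppose, for contradiction, that $\{\phi_i\}$ were dependent, so that some $w \in \Gamma$ satisfies $\phi(x)^\intercal w = 0$ for every $x$. Evaluating this identity at $x = x(1),\dots,x(N)$ and stacking the resulting scalar equations yields $\Phi^\intercal w = 0$. Because $\Phi^\intercal$ is invertible this forces $w = 0$, contradicting $w \in \Gamma \subseteq \mathbb{R}^{N}\setminus\{0\}$. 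Therefore no such $w$ exists, and by \Cref{Definition:LinearIndependence} the functions $\phi_1,\dots,\phi_N$ are independent.

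The substantive content is carried entirely by \Cref{Lemma:UnitVectors}, which supplies the $N$ evaluation points realizing an invertible $\Phi$; the corollary itself is just the translation of that invertibility into the quantifier form of \Cref{Definition:LinearIndependence}. Consequently the only point requiring care is bookkeeping with the definition of $\Gamma$, specifically confirming $\Gamma \subseteq \mathbb{R}^N\setminus\{0\}$ so that the contradiction $w = 0$ indeed rules out membership in $\Gamma$. One could instead argue self-containedly that $\phi(x)^\intercal w = \sum_{i\geq j} w_{\kappa(i,j,n)}\, x_i x_j$ is a homogeneous quadratic whose monomials $x_i x_j$ are linearly independent as polynomials, so that it vanishes identically only when $w = 0$; but routing through the lemma is cleaner, since the invertible $\Phi$ is already in hand and no separate polynomial argument is then needed.
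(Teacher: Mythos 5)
Your proof is correct and follows exactly the route the paper intends: the result is stated as an immediate corollary of \Cref{Lemma:UnitVectors} (the paper omits the argument, but its placement makes the derivation you give --- evaluate a supposed dependence relation at the $N$ points furnishing an invertible $\Phi$, conclude $w=0$, contradict $w\in\Gamma$ --- the evident one). Your bookkeeping that $\Gamma \subseteq \mathbb{R}^N\setminus\{0\}$ and that the paper's ``dependent''/``independent'' clauses in \Cref{Definition:LinearIndependence} are exact logical negations is exactly the care the step requires.
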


\begin{lem} \label{Lemma:ExistenceData}
	Let $N = \frac{n(n+1)}{2}$, and let $\mathbf{A} = [\sigma(x(t_k)) \hdots \sigma(x(t_N))]^{\intercal}$ where $\sigma(x) = [\sigma_1(x),\hdots,\sigma_N(x)]^{\intercal}$ with $\sigma(x(t_k)) = \phi(x(t_k)) - \phi(x(t_k+\tau))$ as shown in \eqref{Equation:A_datum}. Let $x(t_k+\tau)$ satisfy \eqref{Equation:StateTrajectories} for a stabilizing $K_i$. If $\phi_1(x),\hdots,\phi_N(x)$ is a linearly independent set, then $\forall \tau: \exists [x(t_1),\hdots, x(t_N)]:rank(\mathbf{A})=N$.
\end{lem}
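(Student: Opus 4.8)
The plan is to reduce the full-rank condition on $\mathbf{A}$ to the nonsingularity of two separate $N\times N$ matrices: a fixed matrix representing the trajectory map and a regressor matrix $\Phi$ built from the sample points. First I would exploit the trajectory relation $x(t_k+\tau)=e^{(A+BK_i)\tau}x(t_k)$ from \eqref{Equation:StateTrajectories} to express $\phi(x(t_k+\tau))$ as a linear image of $\phi(x(t_k))$. Writing $A_{cl}=A+BK_i$, the outer product transforms as $x(t_k+\tau)x(t_k+\tau)^{\intercal}=e^{A_{cl}\tau}\,x(t_k)x(t_k)^{\intercal}\,e^{A_{cl}^{\intercal}\tau}$, and since the map $S\mapsto e^{A_{cl}\tau}Se^{A_{cl}^{\intercal}\tau}$ sends symmetric matrices to symmetric matrices it is represented, in the $vec^{L}(\cdot)$ coordinates of \Cref{Section:Notation}, by a fixed matrix $\Theta(\tau)\in\mathbb{R}^{N\times N}$ depending only on $A_{cl}$ and $\tau$. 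Hence $\phi(x(t_k+\tau))=\Theta(\tau)\phi(x(t_k))$, so each row of $\mathbf{A}$ reads $\sigma(x(t_k))=\bigl(I-\Theta(\tau)\bigr)\phi(x(t_k))$.

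Collecting the $N$ samples gives the factorization $\mathbf{A}^{\intercal}=\bigl(I-\Theta(\tau)\bigr)\Phi$, where $\Phi=[\phi(x(t_1)),\dots,\phi(x(t_N))]$. Because both factors are square, $\mathrm{rank}(\mathbf{A})=N$ precisely when $I-\Theta(\tau)$ and $\Phi$ are each nonsingular. The factor $\Phi$ is disposed of immediately: the linear-independence hypothesis on $\phi_1,\dots,\phi_N$ (equivalently \Cref{Corollary:LinearIndependence}), together with \Cref{Lemma:UnitVectors}, furnishes sample points $x(t_1),\dots,x(t_N)$ making $\Phi$ full rank, and this choice can be made independently of $\tau$.

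The crux is the nonsingularity of $I-\Theta(\tau)$, and the hard part is doing this without computing $\Theta(\tau)$ explicitly on the lower-triangular subspace. I would argue by contradiction with an iteration estimate. Suppose $I-\Theta(\tau)$ is singular; then there is a nonzero symmetric $S$ fixed by the map, i.e.\ $e^{A_{cl}\tau}Se^{A_{cl}^{\intercal}\tau}=S$. Iterating $n$ times yields $e^{A_{cl}n\tau}Se^{A_{cl}^{\intercal}n\tau}=S$ for all $n$. Since $K_i$ is stabilizing, $A_{cl}$ is Hurwitz, so for $\tau>0$ we have $\|e^{A_{cl}n\tau}\|\to 0$, whence $\|S\|\le\|e^{A_{cl}n\tau}\|^{2}\|S\|\to 0$, forcing $S=0$, a contradiction. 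Equivalently, the eigenvalues of $\Theta(\tau)$ on the symmetric subspace are $e^{(\lambda_i+\lambda_j)\tau}$ with $\lambda_i,\lambda_j\in\Lambda(A_{cl})$, and the Hurwitz property makes each modulus strictly below $1$ for $\tau>0$, so none can equal $1$; the iteration form is preferable since it avoids tracking complex eigenvalues through the $vec^{L}$ coordinatization.

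Combining the pieces, for every $\tau>0$ (indeed any $\tau\neq 0$) the matrix $I-\Theta(\tau)$ is nonsingular, and choosing the $\tau$-independent samples that make $\Phi$ nonsingular yields $\mathrm{rank}(\mathbf{A})=\mathrm{rank}(\mathbf{A}^{\intercal})=N$, as claimed. The one boundary point to flag is $\tau=0$, where $\Theta(0)=I$ and the construction degenerates; this is excluded and is consistent with the intended strictly positive sampling horizon. The main obstacle, as noted, is establishing nonsingularity of $I-\Theta(\tau)$, for which the Hurwitz decay argument is the essential mechanism.
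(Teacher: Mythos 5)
Your proof is correct, and since the paper states this lemma without giving a proof, yours is exactly the argument its setup anticipates: \Cref{Lemma:UnitVectors} and \Cref{Corollary:LinearIndependence} are placed immediately before it precisely to furnish a nonsingular $\Phi$, and your factorization $\mathbf{A}^{\intercal}=\bigl(I-\Theta(\tau)\bigr)\Phi$, with nonsingularity of $I-\Theta(\tau)$ obtained from the Hurwitz property of $A+BK_i$ (via the decay iteration, or equivalently the eigenvalues $e^{(\lambda_i+\lambda_j)\tau}$ all having modulus strictly less than one for $\tau>0$), supplies the one missing mechanism. Your flag that $\tau=0$ must be excluded---there $\Theta(0)=\mathbf{I}$ and $\mathbf{A}=\mathbf{0}$---is also correct and is the right reading of the statement's ``$\forall\tau$'' as a strictly positive reward-window duration.
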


To explore the cases for which on-policy learning along a single state-space trajectory fails requires decomposing \eqref{Equation:StateTrajectories} in terms of its generalized modes. Similar analysis is possible for off-policy learning as well. Due to space limitation, we limit the discussion here to the following lemma and theorem useful for \Cref{Section:Example}.
\begin{lem} \label{Lemma:DiagonalizableDegenerateNonDegenerate}
	Let $A$ be diagonalizable. Assume $\exists \lambda \in \Lambda(A)$ such that $\lambda$ is $m$-fold degenerate. $\forall x(t_0)  \in \mathbb{R}^{n}$, if $\mathbf{A}$ in \eqref{Equation:A_datum} is formed from data points along $x(t_k)=e^{A(t_k-t_{0} )} x(t_{0})$, then $rank(\mathbf{A})<N$.
\end{lem}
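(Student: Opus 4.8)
The plan is to exploit the fact that, along a single trajectory, the entire time-dependence of each row of $\mathbf{A}$ factors through a small number of exponential modes, and then to count those modes.

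First I would diagonalize. Since $A$ is diagonalizable, write $A=V\Lambda V^{-1}$ with $\Lambda=\mathrm{diag}(\lambda_1,\dots,\lambda_n)$ and eigenvectors $v_1,\dots,v_n$, working over $\mathbb{C}$ if $A$ has complex eigenvalues; the rank of the real matrix $\mathbf{A}$ is unaffected by this passage. The trajectory $x(t)=e^{A(t-t_0)}x(t_0)$ then expands as $x(t)=\sum_{i=1}^{n}\gamma_i\,e^{\lambda_i(t-t_0)}\,v_i$ with $\gamma=V^{-1}x(t_0)$. Consequently each entry of $xx^{\intercal}$ reads $x_a(t)x_b(t)=\sum_{i,j}\gamma_i\gamma_j(v_i)_a(v_j)_b\,e^{(\lambda_i+\lambda_j)(t-t_0)}$, so every component $\phi_l(x(t))$ of $\phi(x)=vec^{L}(xx^{\intercal})$ is a linear combination of the exponentials $e^{\mu(t-t_0)}$, $\mu\in S:=\{\lambda_i+\lambda_j:1\le i\le j\le n\}$. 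Writing $D=|S|$ for the number of \emph{distinct} pairwise sums, I would record $\phi_l(x(t))=\sum_{\mu\in S}\alpha_{l\mu}\,e^{\mu(t-t_0)}$ for coefficients $\alpha_{l\mu}$ that depend on $x(t_0)$ only through $\gamma$.

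Next I would read off a factorization of $\mathbf{A}$. The $l$-th entry of the $k$-th row of \eqref{Equation:A_datum} is $\phi_l(x(t_k+\tau))-\phi_l(x(t_k))=\sum_{\mu\in S}\alpha_{l\mu}(e^{\mu\tau}-1)\,e^{\mu(t_k-t_0)}$, which exhibits $\mathbf{A}=EC$ where $E$ is the matrix with $E_{k\mu}=e^{\mu(t_k-t_0)}$ and $C$ is the matrix with $C_{\mu l}=(e^{\mu\tau}-1)\alpha_{l\mu}$, both having inner dimension $D$. Hence $rank(\mathbf{A})=rank(EC)\le D$, and this ceiling holds for every choice of sampling times and every $x(t_0)$, since the value $x(t_0)$ enters only through the coefficients in $C$ and never through $E$. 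It then remains to count: the map sending each index pair $(i,j)$ with $i\le j$ (there are exactly $\frac{n(n+1)}{2}=N$ of them) to $\lambda_i+\lambda_j$ has image $S$. Because $\lambda$ is $m$-fold degenerate with $m\ge 2$, I pick indices $i_1\ne i_2$ with $\lambda_{i_1}=\lambda_{i_2}=\lambda$; the three distinct pairs $(i_1,i_1),(i_1,i_2),(i_2,i_2)$ all map to $2\lambda$, so at least three of the $N$ pairs collapse to a single element of $S$, forcing $D\le N-2<N$. Combining with the rank bound gives $rank(\mathbf{A})\le D<N$ for all $x(t_0)$, as claimed.

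The main obstacle is really the factorization insight $\mathbf{A}=EC$: once one sees that the quadratic features $\phi_l$ evaluated along a single trajectory all live in the $D$-dimensional span of the pairwise-sum exponentials, the rank ceiling is immediate and the degeneracy count is elementary. The remaining care points are minor, namely the passage to $\mathbb{C}$ while preserving rank, and observing that the bound is uniform in $x(t_0)$ and in the number of samples, so that no placement of data along the single trajectory can restore full rank.
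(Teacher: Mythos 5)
Your proof is correct, and it follows precisely the route the paper indicates for this result -- decomposing the single trajectory $x(t)=e^{A(t-t_0)}x(t_0)$ into its modes, observing that every quadratic feature $\phi_l$ then lies in the span of the $D$ pairwise-sum exponentials $e^{(\lambda_i+\lambda_j)(t-t_0)}$, and showing the degeneracy of $\lambda$ collapses at least three index pairs onto $2\lambda$ so that $D\le N-2<N$ bounds $\mathrm{rank}(\mathbf{A})$ regardless of sample placement. The paper itself omits the detailed argument for space reasons, and your factorization $\mathbf{A}=EC$ with inner dimension $D$ is a clean, complete way of making that modal-decomposition argument rigorous.
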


\begin{thm} \label{Theorem:MinimumNumberUniqueTrajectories}
	To avoid data collinearity in \eqref{Equation:A_datum}, it is necessary to explore a minimum number of $\frac{n(n+1)}{2}$ pairwise distinct state-space trajectories.
\end{thm}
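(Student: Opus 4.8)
The plan is to recast the phrase ``avoid data collinearity'' as the requirement that $\mathbf{A}$ in \eqref{Equation:A_datum} have full column rank, $rank(\mathbf{A})=\tfrac{n(n+1)}{2}$: the regression $\mathbf{A}\mathbf{w}=\mathbf{b}$ is free of dependent columns, and hence solvable uniquely for the $\tfrac{n(n+1)}{2}$ entries of $\mathbf{w}$, precisely when this holds. With collinearity-freeness identified with full column rank, the theorem reduces to a lower bound on the number of trajectories whose data populate the rows of $\mathbf{A}$.

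First I would apply the elementary inequality $rank(\mathbf{A})\le\min\{\#\text{rows},\,\tfrac{n(n+1)}{2}\}$, so that attaining $rank(\mathbf{A})=\tfrac{n(n+1)}{2}$ forces $\mathbf{A}$ to carry at least $\tfrac{n(n+1)}{2}$ rows. Each row equals $\phi(x_\tau[k])^\intercal-\phi(x[k])^\intercal$ and is manufactured from a single explored segment $\varphi(t,x[k],u_i)$ with $x_\tau[k]=e^{(A+BK_i)\tau}x[k]$; in the data layout of \eqref{Equation:A_datum}, where the $x[k]$ are not assumed to share a flow line, one segment contributes exactly one row. Hence producing $\tfrac{n(n+1)}{2}$ rows requires exploring at least $\tfrac{n(n+1)}{2}$ trajectories, which is the claimed minimum.

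Next I would establish that these trajectories must be \emph{pairwise distinct}. If two explored segments coincide, i.e. $x[k]=x[k']$ for $k\neq k'$, then the two rows of \eqref{Equation:A_datum} are identical, so the rows span a subspace of dimension strictly below $\tfrac{n(n+1)}{2}$ and $rank(\mathbf{A})<\tfrac{n(n+1)}{2}$, reinstating collinearity; thus repetition is inadmissible. That this minimum of $\tfrac{n(n+1)}{2}$ pairwise distinct trajectories can actually be realized -- distinctness being necessary but not sufficient -- follows from \Cref{Lemma:ExistenceData} together with \Cref{Lemma:UnitVectors,Corollary:LinearIndependence}, which furnish states whose $\phi$-images are linearly independent.

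The step I expect to be the real obstacle is excluding the apparent shortcut of drawing many rows from \emph{few} flow lines by sampling a single trajectory at many times $t_k$, as contemplated in \Cref{Remark:Collinearity}. To close this gap I would use the modal decomposition underlying \Cref{Lemma:DiagonalizableDegenerateNonDegenerate}: along $x(t)=e^{(A+BK_i)(t-t_0)}x_0$, the symmetric matrix $x(t)x(t)^\intercal$ evolves inside the span of fixed matrices $M_s$ indexed by the \emph{distinct} modal sums $\lambda_j+\lambda_l$, so all its sampled differences lie in a subspace whose dimension is at most the number of such sums. In the maximally degenerate regime, e.g. $A+BK_i=\lambda\mathbf{I}$, every difference collapses to a scalar multiple of $\phi(x_0)$, so a single flow line supplies only a one-dimensional subspace no matter how densely it is sampled. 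Consequently fewer than $\tfrac{n(n+1)}{2}$ distinct flow lines can never reach rank $\tfrac{n(n+1)}{2}$, and it is this worst case -- admissible because no prior spectral knowledge of $A+BK_i$ is available -- that makes exploring $\tfrac{n(n+1)}{2}$ pairwise distinct state-space trajectories necessary, completing the argument.
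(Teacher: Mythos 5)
Your last paragraph is the actual proof, and it is the right one. The paper itself states \Cref{Theorem:MinimumNumberUniqueTrajectories} without a written-out proof (it defers, citing space, to \Cref{Lemma:DiagonalizableDegenerateNonDegenerate}), so the comparison is against the argument that lemma is evidently meant to support --- and that is precisely your worst-case modal argument: under the admissible worst case $A+BK_i=\lambda\mathbf{I}$ with stabilizing $\lambda<0$, one has $\phi(x(t))=e^{2\lambda(t-t_0)}\phi(x(t_0))$, so every row of \eqref{Equation:A_datum} harvested from one flow line is a nonzero scalar multiple of $\phi(x(t_0))^{\intercal}$; each pairwise distinct trajectory then contributes at most one dimension of column rank, so fewer than $\frac{n(n+1)}{2}$ of them can never reach rank $\frac{n(n+1)}{2}$ no matter how densely they are sampled, and necessity follows because the learner, having no prior knowledge of $A$, must guard against exactly this spectrum. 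Your realizability remark via \Cref{Lemma:UnitVectors,Corollary:LinearIndependence,Lemma:ExistenceData} is also consistent with the paper.

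What you should repair is the second paragraph. The inference ``at least $\frac{n(n+1)}{2}$ rows are needed, one segment contributes one row, hence at least $\frac{n(n+1)}{2}$ trajectories are needed'' is circular: a segment is not a trajectory, and the whole point of \Cref{Remark:Collinearity} is that arbitrarily many rows can be drawn from a \emph{single} flow line by sampling it at many times $t_k$. Indeed, no universal claim of that kind can hold: for generic spectrum (e.g.\ $A+BK_i$ diagonalizable with distinct eigenvalues $\lambda_i$ such that the $\frac{n(n+1)}{2}$ sums $\lambda_i+\lambda_j$, $i\le j$, are all distinct), rows sampled along one trajectory at distinct times do achieve full column rank, so the row count argument cannot by itself force multiple trajectories. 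Similarly, your third paragraph only excludes repeated \emph{segments} ($x[k]=x[k']$), not repeated flow lines. Both of these paragraphs are subsumed by, and should simply be replaced with, the worst-case argument of your final paragraph, which alone carries the theorem in the guarantee sense in which it is meant.
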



\section{Shared Linear Quadratic Regulator} \label{Section:sLQR}

In this section, we leverage the analysis and insights provided in \Cref{Section:PreliminariesPolicyIterations,Section:NewSolvabilityAnalysis} to synthesize solutions to \Cref{Problem:MainProblem,Problem:TakeOver} posed in \Cref{Section:ProblemFormulation}.

In \Cref{Problem:MainProblem}, the aim is to find $u_a(x)$ that optimizes the human-in-the-loop closed-loop dynamics \eqref{Equation:HCI} by minimizing \eqref{Equation:MainProblem}. Since the dynamics as seen by $u_a(x)$ is shaped by the human input, we lump the human input with $A$ and rewrite the closed-loop dynamics as follows
\begin{equation} \label{Equation:SynthesisProblem1} 
\dot{x}=A_h+Bu_a(x),
\end{equation}
\noindent where ${{A}_{h}}=A+B{K_hC_h}$ is unknown to the autonomy system. Note that the integrand in \eqref{Equation:MainProblem} is quadratic in $x$ and $u_a$, thus the underlying ARE is given by
\begin{equation} \label{Equation:ARE-ParallelAutonomy} 
0=PA_{h}+{A_h}^{\intercal}P- PBR^{-1}B^{\intercal}P+Q_h,
\end{equation}
where $Q_h=Q+{C}_{h}^{\intercal}K_{h}^{\intercal}M{{K}_{h}{C}_{h}}$. The minimum autonomy intervention policy is given by $u_a(x)=-R^{-1} B^{\intercal}P x$ where $P$ is the stabilizing solution of \eqref{Equation:ARE-ParallelAutonomy}.

To solve \eqref{Equation:ARE-ParallelAutonomy} in a data-driven way, we can use either on-policy learning or off-policy learning. In both cases, we require that $A+B{K_hC_h}$ is Hurwitz. Moreover, the cost function's design parameters $M, Q, R$, the learning data size $N$, and the duration of the reward window $\tau$ are all required. Finally, access to signals $x(t)$ and $u_h(x)$ as well as knowledge of the input matrix $B$ are all required by the autonomy system.

In on-policy learning of a minimum intervention policy, we let $u_a(x) = u_i(x)$ at each policy iteration thus the closed-loop is changing at each iteration. We initialize $u_0(x) = 0$ since the open-loop is already stable due to $A_{h}$ being Hurwitz. This is shown in \Cref{Algorithm:InterventionOnPolicy}.

In  off-policy learning of a minimum intervention policy, we let $u_a(x) = 0$ at each policy iteration thus the closed-loop is fixed -- the behavior policy is $u_a$ and thus ${\Delta(u_a,u_i)}$ is used in $\delta_k$ in \eqref{Equation:deltaAll}. We initialize $u_0(x) = 0$ since $A_{h}$ is Hurwitz. For brevity, we do not show the full algorithm, but it is along the same lines of the off-policy algorithm shown in \Cref{Algorithm:TakeOverOffPolicy} for the takeover problem. You can also consult \cite{CodeMATLAB} for MATLAB implementation. Note that target policy $u_i$ is floating during iterations, \emph{i.e.} never applied to the plant.

\begin{algorithm}[H]
	\caption{\small Learning Minimum Intervention Policy (\textbf{On-Policy})}\label{Algorithm:InterventionOnPolicy}
	\begin{algorithmic}[1]		
		\Function{Main}{}
		\State \Call{Initialize}{$\mathbf{0}$,$t_0$,$x_0$}
		\Repeat
		\State $u_a(x(t))\gets u_i(x(t))$ \Comment{Closed-loop updates $\forall i$}
		\State $\mathbf{A},\mathbf{b} \gets$ \Call{ExploitPolicy}{} \Comment{Exploit to gather data} 
		\State $W \gets {\mathbf{A}}^{-1} \mathbf{b}$; $P_i \gets reshape(W)$ \Comment {Compute weights}
		\State $K_{i+1} \gets -R^{-1} B^{\intercal} P_{i}$; $i \gets i+1$; $u_i(x(t))\gets K_ix(t)$
		\Until{$P_i$ converges}
		\State \textbf{return} $P_i$
		\EndFunction
		\Function{Initialize}{gain,time,state} \label{Procedure:Intialize} \Comment{Set $u_0$}
		\State $K_0 \gets gain$; $u_0\gets K_0x(t)$; $t_0 \gets time$; $x(t_0) \gets state$; $i \gets 0$
		\State Prepare $\mathbf{A}$ and $\mathbf{b}$ in \eqref{Equation:Ab_datum} for data.
		\EndFunction
		\Function{ExploitPolicy}{} \label{Procedure:ExploitPolicy}
		\While{$(size(\mathbf{A}) < N) \; \lor (cond(\mathbf{A})<\epsilon)  $}
		\State $r_x,r_{u_{h}},r_{u_{i}} \gets$ \Call{EvaluateReward}{}
		\State Add $r_x,r_{u_{h}},r_{u_{i}}$ to $\mathbf{b}$ in \eqref{Equation:b_datum}; 
		\State Add $\phi(x(t_k)), \phi(x(t_k+\tau))$ data to $\mathbf{A}$ in \eqref{Equation:A_datum}
		\State \Call{Nudge}{} \Comment{To switch to a new trajectory to explore}
		\State  $u_a(x) \gets u_i(x(t))$ \Comment{Continue exploiting this policy}
		\EndWhile
		\State \textbf{return} $\mathbf{A},\mathbf{b}$
		\EndFunction
		\Function{EvaluateReward}{}
		\Statex Dynamics is evolving per $x(t)=e^{(A_h+BK_{i} )(t-t_{k} )} x(t_{k})$.
		\State  $r_x \gets  \smallint\limits_{t_{k}}^{t_{k}+\tau }{ {{x}^{\intercal}}Qxdt}$; $r_{u_{h}} \gets \smallint\limits_{t_{k}}^{t_{k}+\tau }{ u_{h}^{\intercal} (x)Mu_{h}(x) dt}$
		\State  $r_{u_{i}} \gets \smallint\limits_{t_{k}}^{t_{k}+\tau }{ u_{i}^{\intercal}(x)R{{u}_{i}}(x) ) dt}$
		\State \textbf{return} $r_x,r_{u_{h}},r_{u_{i}}$ \Comment{Return reward}
		\EndFunction
		\Function{Nudge}{}
		\State  $u_a(x) \gets u_i(x(t)) + PRNG$ \Comment{Pseudorandom Number Generator}
		\EndFunction
	\end{algorithmic}
\end{algorithm}

\begin{algorithm}[H]
	\caption{\small Learning Optimal Takeover Policy (\textbf{Off-Policy})}\label{Algorithm:TakeOverOffPolicy}
	\begin{algorithmic}[1]
		\Function{Main}{}
		\State \Call{Initialize}{$u_h(x)$,$t_0$,$x_0$}
		\State $u_a(x(t))\gets \mathbf{0}$ \Comment{Closed-loop fixed $\forall i$}
		\State $\mathbf{A},\mathbf{b} \gets$ \Call{ExploitPolicy}{} \Comment{Exploit to gather data} 
		\Repeat
		\State $W \gets {\mathbf{A}}^{-1} \mathbf{b}$; $P_i \gets reshape(W)$ \Comment {Compute weights}
		\State $K_{i+1} \gets -R^{-1} B^{\intercal} P_{i}$; $i \gets i+1$; ; $u_i(x(t))\gets K_ix(t)$
		\State $\mathbf{A},\mathbf{b} \gets$ \Call{Recompute}{$\mathbf{A}$,$\mathbf{b}$} 
		\Until{$P_i$ converges}
		\State \textbf{return} $P_i$
		\EndFunction
		\Function{Initialize}{gain or signal,$time$,$state$} \label{Procedure:OffPolicyIntializeTakeOver} \Comment{Set $u_0$}
		\State $u_0(x(t)) \gets u_h(x(t))$; $t_0 \gets time$; $x(t_0) \gets state$; $i \gets 0$
		\State Prepare $\mathbf{A}$ and $\mathbf{b}$ in \eqref{Equation:Ab_datumOffPolicy} for data.
		\EndFunction
		

		\Function{ExploitPolicy}{} \label{Procedure:ExploitPolicyOff}
		\While{$(size(\mathbf{A}) < N) \; \lor (cond(\mathbf{A})<\epsilon)  $}
		\State $\delta_k(F),\delta_k(\mathbf{e}_{pq}),r_k(Q), r_k(\mathbf{e}_{pq})  \gets$ \Call{EvaluateReward}{}
		\State $\delta_k(K_i) \gets \eqref{Equation:deltaKi}, \delta_k \gets \eqref{Equation:deltaAll}$; $r_k(K_i) \gets \eqref{Equation:rKi},r_k \gets \eqref{Equation:rAll}$
		\State Add $,r_k$ to $\mathbf{b}$ in \eqref{Equation:b_datumOffPolicy}
		\State Add $\phi(x(t_k)), \phi(x(t_k+\tau))$, $\delta_k$ to $\mathbf{A}$ in \eqref{Equation:A_datumOffPolicy}
		\State \Call{Nudge}{} \Comment{To switch to a new trajectory to explore}
		\State  $u_a(x) \gets \mathbf{0}$ \Comment{Continue exploiting this policy}
		\EndWhile
		\EndFunction
		\Function{EvaluateReward}{}
		\Statex Dynamics is evolving per $x(t)=e^{(A_h+BF )(t-t_{k} )} x(t_{k})$.
		\State $\delta_k(F) \gets \eqref{Equation:deltaF}$; $\delta_k(\mathbf{e}_{pq}) \gets \eqref{Equation:deltae}$
		\State $r_k(Q) \gets \eqref{Equation:rQ}$; $r_k(\mathbf{e}_{pq}) \gets \eqref{Equation:re}$
		\State \textbf{return} $\delta_k(F),\delta_k(\mathbf{e}_{pq}),r_k(Q), r_k(\mathbf{e}_{pq}) $ \Comment{Return $K_i$ free parts}
		\EndFunction
		\Function{Nudge}{}
		\State  $u_a(x) \gets \mathbf{0} + PRNG$ \Comment {Pseudorandom Number Generator}
		\EndFunction
		\Function{Recompute}{$\mathbf{A}$,$\mathbf{b}$} \label{Procedure:UpdateAb}
		
		\State $\forall k, \delta_k(K_i) \gets \eqref{Equation:deltaKi}, \delta_k \gets \eqref{Equation:deltaAll}$
		\State $\forall k, r_k(K_i) \gets \eqref{Equation:rKi},r_k \gets \eqref{Equation:rAll}$
		\If{$ cond(\mathbf{A})<\epsilon  $}
		\State $\mathbf{A},\mathbf{b} \gets$ \Call{ExploitPolicy}{}
		\EndIf
		\State \textbf{return} $\mathbf{A},\mathbf{b}$
		\EndFunction	
	\end{algorithmic}
\end{algorithm}

In \Cref{Problem:TakeOver}, the aim is to find $u_a(x)$ that is optimal after the human operator is removed leaving the autonomy system alone. Thus the underlying ARE is given by
\begin{equation} \label{Equation:ARE-ParallelAutonomyTakeOver} 
0=PA+A^{\intercal}P- PBR^{-1}B^{\intercal}P+Q,
\end{equation}
The optimal takeover policy is given by $u_a(x)=-R^{-1} B^{\intercal}P x$ where $P$ is the stabilizing solution of \eqref{Equation:ARE-ParallelAutonomyTakeOver}. During learning, $u_a = 0$ and $u_0 = u_h$ as shown in \Cref{Algorithm:TakeOverOffPolicy}. This initialization differs from the off-policy implementation for the minimum intervention policy learning. The behavior policy here is $u_h + u_a$, thus ${\Delta(u_h + u_a,u_i)}$ is used in $\delta_k$ in \eqref{Equation:deltaAll} which is another difference from the minimum intervention case.
Lastly, we should point out that the closed-loop dynamics for the first iteration in all three algorithms is the same, thus the off-policy takeover learning can be implemented in parallel to the learning of the minimum intervention policy.

\section{Car-following Example} \label{Section:Example}
We show an application of sLQR to a car-following problem in which a car with a parallel autonomy system is to maintain a particular constant spacing from a leading vehicle, and achieve the same speed.

\begin{figure}[!htb]
	\center{\includegraphics[width=\columnwidth]
		{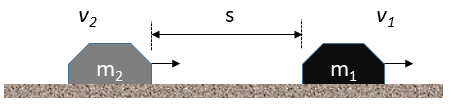}}
	\caption{\label{Figure:CarFollowing} Car Following.}
\end{figure}

The error dynamics are adapted from \cite{LevineAthans1966}
\begin{equation} \label{Equation:CarFollowingErrorDynamics}
	\begin{aligned}
		\dot{x}_1(t) =& -\frac{\alpha_1}{m_1} x_1(t), \\
		\dot{x}_2(t) =& x_1(t) - x_3(t), \\
		\dot{x}_3(t) =& -\frac{\alpha_2}{m_2} x_3(t) + \frac{1}{m_2}  u,
	\end{aligned}
\end{equation}
where $x_1(t) = \tilde{v}_1(t)$, $x_2(t) = \tilde{s}(t)$, $x_3(t) = \tilde{v}_3(t)$ and $u(t) = \tilde{f_2}(t)$. Moreover, $\tilde{v}_1$, $\tilde{v}_2$ are the speed error variables and $\tilde{s}$ is the spacing error variable and $\tilde{f_2}(t)$ is the force error applied to the following car. Let $m_1=m_2=1$ and $\alpha_1=\alpha_2=1$.

We assume that the human operator is applying the following gains $K_h = [0 \; 1 \; -1]$ not known to the autonomy system. Unlike the autonomy system, we assume the human operator has no access to the speed error of the leading vehicle $\tilde{v}_1(t)$, thus 
	\[C_h = \left[  \begin{array}{cl}
	0 & \mathbf{0}\\ \mathbf{0} & \mathbf{I}_{2}
	\end{array}\right].
	\]
Note that the $A_h$ matrix of this dynamical system has repeated eigenvalues and is diagonalizable, thus \Cref{Lemma:DiagonalizableDegenerateNonDegenerate} implies learning cannot happen on the same state-space trajectory. In what follows, we simulate three learning algorithms using the following design parameters $Q = 5\mathbf{I}_{3}$, $M = 1$, $R=10$ and $\tau = 0.01$.

\begin{figure}[!htb]
	\center{\includegraphics[height=6cm,width=\columnwidth,trim={3cm 0 5cm 0},clip]
		{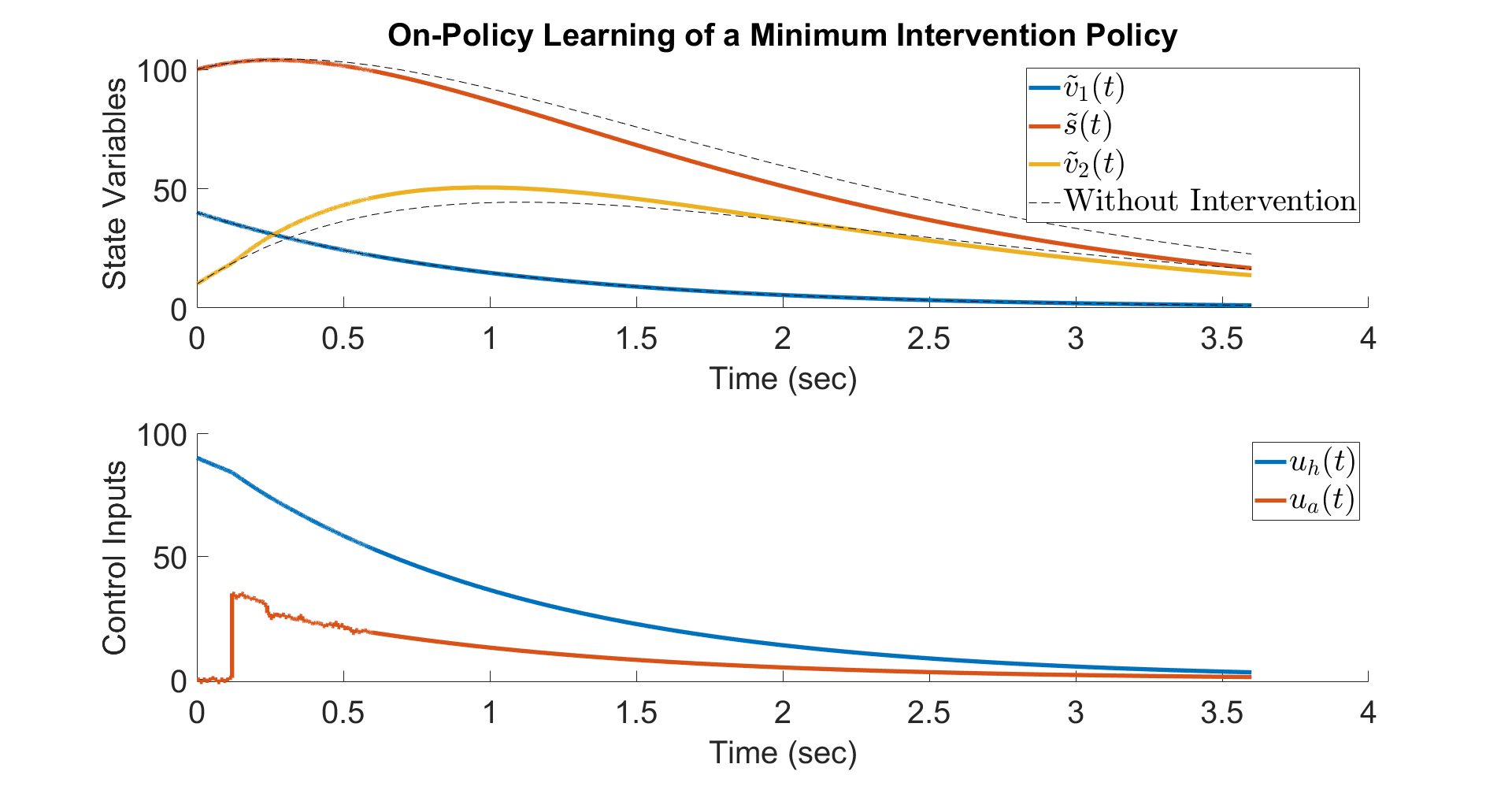}}
	\caption{\label{Figure:OnPolicyIntervention} On-Policy Learning of a Minimum Intervention Policy.}
\end{figure}

\begin{figure}[!htb]
	\center{\includegraphics[height=6cm,width=\columnwidth,trim={3cm 0 5cm 0},clip]
		{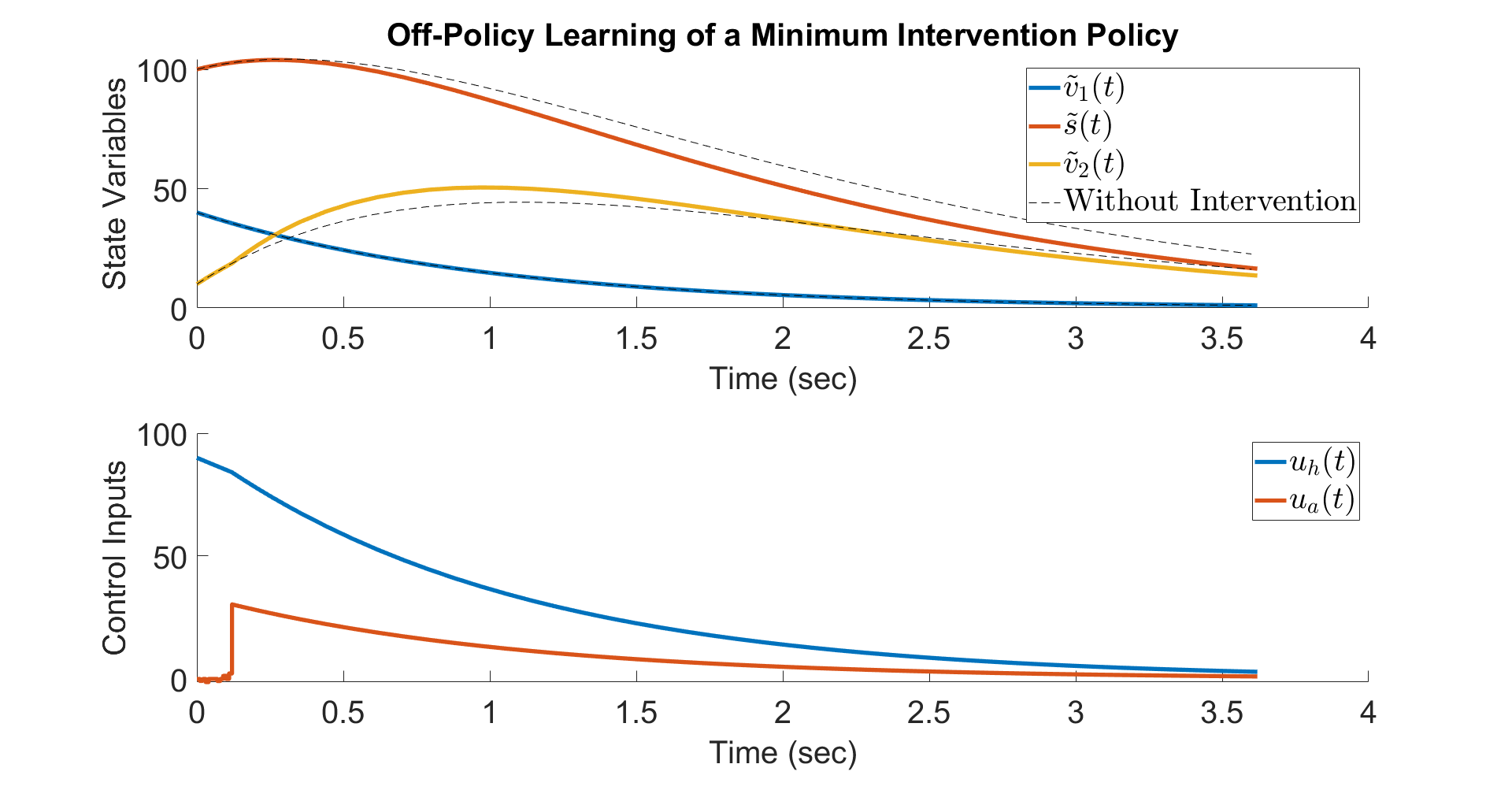}}
	\caption{\label{Figure:OffPolicyIntervention} Off-Policy Learning with Experience Replay of a Minimum Intervention Policy.}
\end{figure}

\begin{figure}[!htb]
	\center{\includegraphics[height=6cm,width=\columnwidth,trim={3cm 0 5cm 0},clip]
		{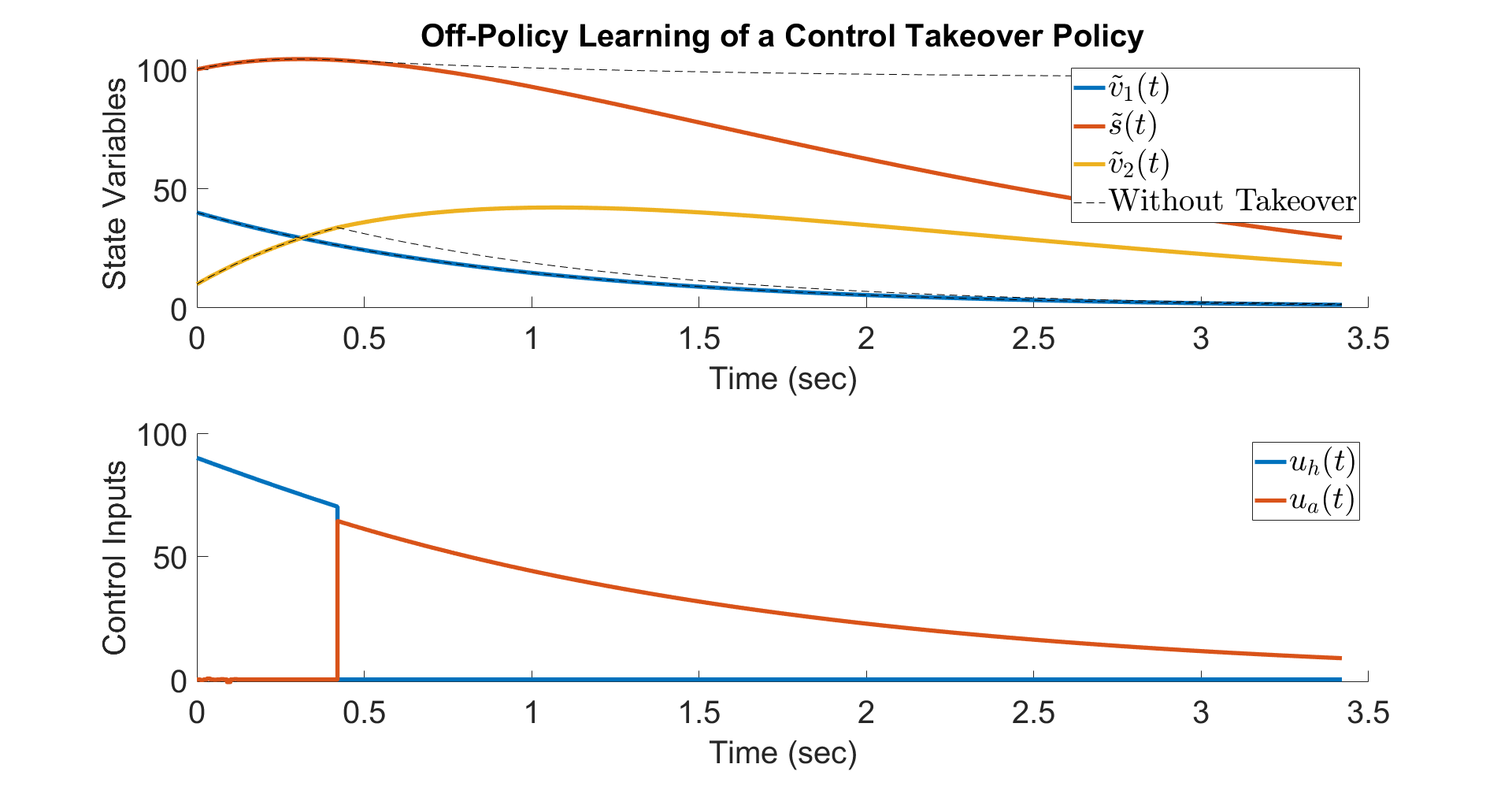}}
	\caption{\label{Figure:OffPolicyTakeover} Off-Policy Learning with Experience Replay of a Control Takeover Policy.}
\end{figure}

As seen in the simulations, the exploration needed is minimal and is enough to change current trajectory to a new one to resume learning. The strength of the nudge is minimal compared to the strength of $u_a$ or $u_h$. Note also that off-policy learning requires less data as the closed-loop remains fixed, and the gains can be recomputed over old trajectory data by experience replay. This causes less interference with the human operator and may be more favorable. MATLAB code to reproduce the results is at \cite{CodeMATLAB}. As can be seen, only the initial states of trajectory segments are selected randomly by a nudge. Once on a trajectory segment, no noise is injected while evaluating the reward.

\section{Conclusion} \label{Section:Conclusion}
The sLQR empowers human operators due to the full access the autonomy system has to the state of the plant. Additionally, the role of exploration ensures minimal interference with the human operator.

%


\end{document}